\newcommand*{\constr}[1]{\mathord{\mathit{#1}}}
\newcommand{\inv}{\mathcal{I}}
\newcommand{\eq}{\approx}
\newcommand{\seteq}{\approx^\mathrm{S}}
\newcommand{\nseteq}{\noteq^\mathrm{S}}
\newcommand{\noteq}{\not\approx}
\newcommand{\merge}{\diamond}
\newcommand{\ltstate}{\vartriangleleft}
\newcommand{\joins}{\downarrow}
\newlist{invariant}{enumerate}{1}
\setlist[invariant]{label=(B\arabic*)}
\crefname{invarianti}{invariant}{invariants} 
\Crefname{invarianti}{Invariant}{Invariants}
\begin{document}

\title{Confluence Modulo Equivalence with Invariants in Constraint Handling Rules}
\titlerunning{Confluence Modulo Equivalence with Invariants in CHR}  % abbreviated title (for running head)
%                                     also used for the TOC unless
%                                     \toctitle is used
%
\author{Daniel~Gall \and
Thom~Frühwirth}
\authorrunning{D. Gall and T. Frühwirth} % abbreviated author list (for running head)
\institute{Institute of Software Engineering and Programming Languages, Ulm University, \\89069 Ulm, Germany\\
\email{daniel.gall@uni-ulm.de}, \email{thom.fruehwirth@uni-ulm.de}}

\maketitle              % typeset the title of the contribution

\pagestyle{headings}  % switches on printing of running heads

\begin{abstract}
Confluence denotes the property of a state transition system that states can be rewritten in more than one way yielding the same result. Although it is a desirable property, confluence is often too strict in practical applications because it also considers states that can never be reached in practice. Additionally, sometimes states that have the same semantics in the practical context are considered as different states due to different syntactic representations. By introducing suitable invariants and equivalence relations on the states, programs may have the property to be confluent modulo the equivalence relation w.r.t. the invariant which often is desirable in practice.

In this paper, a sufficient and necessary criterion for confluence modulo equivalence w.r.t. an invariant for Constraint Handling Rules (CHR) is presented. It is the first approach that covers invariant-based confluence modulo equivalence for the de facto standard semantics of CHR. There is a trade-off between practical applicability and the simplicity of proving a confluence property. Therefore, a better manageable subset of equivalence relations has been identified that allows for the proposed confluence criterion and and simplifies the confluence proofs by using well established CHR analysis methods.
\end{abstract}

% \begin{abstract}
% The abstract should summarize the contents of the paper
% using at least 70 and at most 150 words. It will be set in 9-point
% font size and be inset 1.0 cm from the right and left margins.
% There will be two blank lines before and after the Abstract. \dots
% \keywords{computational geometry, graph theory, Hamilton cycles}
% \end{abstract}

\section{Introduction}

In program analysis, the \emph{confluence} property of a program plays an important role. It ensures that any computation for a given start state results in the same final state. Hence, if more than one rule is applicable in a state it does not matter which rule is chosen. 

\emph{Constraint Handling Rules (CHR)} is a declarative programming language that has its origins in constraint logic programming \cite{fru_chr_book_2009}. Confluence analysis has been studied for CHR for a long time \cite{abd_fru_meuss_confluence_1996,abd_sem_conf_prop_rules_cp97,abdennadher1999}.

While it is a desirable property, in practical applications confluence is often too strict. For instance, it requires even states that can never be reached in a practical context to satisfy the confluence property. Therefore, \emph{invariant-based confluence} \cite{duck_stuck_sulz_observable_confluence_iclp07,duck_stuckey_sulzmann_observable_confl_chr06,raiser_phdthesis10} has been established. It only considers states that satisfy a user-defined invariant, 
%For instance, in the blocks world domain, objects can be moved and stacked on each other by a robot arm. In this context, even invalid states have to be considered for strict confluence, e.g. states where the robot arm holds more than one object at a time, although such states can never appear in practice. 
whereas standard confluence analysis considers even states that are invalid and cannot appear at runtime.
With invariant-based confluence analysis it is possible to exclude those states from the confluence analysis as long as the rules of the program maintain the invariant.

Another method of making the confluence property available for more practical programs is to define an equivalence relation on states. A program is \emph{confluent modulo a (user-defined) equivalence relation} if all states in the same equivalence class lead to final states of the same equivalence class \cite{christiansen_2015,christiansen_2017}. In many programs, some states can be considered as equivalent with respect to a user-defined equivalence relation, although their actual representation in the program differs. For example, if sets of numbers are represented as lists, all states with permutations of the same list represent the same set and it might be reasonable to consider them equivalent. Hereby, confluence modulo equivalence can be used to show that for the same start state a program yields the same set as a result, although the actual representation as a list might differ.

There is a trade-off between the applicability in practical contexts and the simplicity of proving a confluence property: There is a  decidable, sufficient and necessary criterion for strict confluence of terminating CHR programs \cite{fru_chr_book_2009}. When adding invariants, decidability of the criterion is lost depending on the invariant. For confluence modulo equivalence, the proofs even become harder as all states in the same equivalence class have to be considered.

In this paper, a sufficient and necessary criterion for invariant-based confluence modulo a user-defined equivalence is presented. For this purpose, a subset of well-behaving equivalence relations is identified for which the proposed criterion can be applied. The confluence criterion is directly available for the equivalence-based operational semantics of CHR \cite{raiser_betz_fru_equivalence_revisited_chr09,raiser_phdthesis10} that is the de facto standard of CHR semantics. By a running example it is shown that the defined subset of equivalence relations is meaningful in a sense that there is a non-trivial equivalence relation that satisfies the restrictions of the subset. Further examples have been tried indicating that the approach is promising to be more widely applicable.

In our approach, we use CHR in its pure form. We then restrict the equivalence relations to a meaningful subset and present a formal proof method for invariant-based confluence modulo equivalence.

The contributions of the paper are
\begin{itemize}
 \item the identification of a subset of equivalence relations (called \emph{compatible} equivalence relations) that maintains the monotonicity property of CHR and therefore allows for a confluence analysis based on rule states and overlaps of rules (c.f. \Cref{sec:equivalence_relations}),
 \item a sufficient and necessary criterion for an invariant-based confluence modulo equivalence for terminating CHR programs with a decidable invariant and a compatible equivalence relation (c.f. \Cref{sec:testing_conf_mod_eq}), and
 \item the application of this approach in a non-trivial running example. 
\end{itemize}

Our approach is the first that covers invariant-based confluence modulo equivalence for the standard semantics of CHR. Other approaches either only consider invariants without user-defined equivalence relations \cite{duck_stuck_sulz_observable_confluence_iclp07,duck_stuckey_sulzmann_observable_confl_chr06,raiser_phdthesis10} or use a special-purpose operational semantics of CHR that is claimed to extend the standard semantics \cite{christiansen_2017,christiansen_2015}. %However, it is left open if this semantics is equivalent to the standard semantics in the comparable parts. Furthermore, 
The latter approach introduces a meta-level to prove confluence modulo equivalence.%that can complicate reasoning. 
In contrast to the meta-level proof method, the confluence criterion in this paper uses well-established standard notions of CHR states and analysis methods. %A detailed delimitation of our work compared to other approaches is given in \Cref{sec:related_work}.

The paper is structured as follows: In \Cref{sec:preliminaries} the preliminaries necessary for understanding the paper are given. For this purpose, definitions of confluence modulo equivalence, Constraint Handling Rules and some program analysis methods for CHR are recapitulated. Then, the subset of equivalence relations regarded in this paper is defined in \Cref{sec:equivalence_relations}. The proof method for invariant-based confluence modulo equivalence is given in \Cref{sec:testing_conf_mod_eq}. The results and its relation to existing work are discussed in \Cref{sec:discussion_related_work}.

\section{Preliminaries}
\label{sec:preliminaries}

We recapitulate the basic notions of confluence modulo equivalence, give a brief introduction to CHR and some program analysis techniques and summarize the established results for (invariant-based) confluence in CHR.

\subsection{Confluence Modulo Equivalence}
\label{sec:preliminaries:conf_mod_eq}

The notion of confluence modulo equivalence is defined for general state transition systems in this section.
\begin{definition}[state transition system]
A \emph{state transition system} is a tuple $(\Sigma,\mapsto)$ where $\Sigma$ is an arbitrary (possibly infinitely large) set of \emph{states} and $\mapsto \subseteq \Sigma \times \Sigma$ is a \emph{transition relation} over the states. By $\mapsto^*$ we denote the reflexive transitive closure of $\mapsto$.
\end{definition}

Informally, confluence modulo equivalence means that all possible computations in a transition system starting in equivalent states finally lead to equivalent states again. We then call two states from those different computations joinable. This is illustrated in \Cref{fig:conf_mod_eq}.

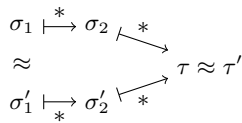
\begin{figure}[htb]
\tikzstyle{trans}+=[draw,|->]
\centering
\begin{tikzpicture}[node distance=1cm]
\node (s1) {$\sigma_1$}; 
\node[below of=s1, node distance=0.5cm] (eq1) {$\approx$}; 
\node[below of=eq1,node distance=0.5cm] (s1p) {$\sigma_1'$}; 
\node[right of=s1] (s2) {$\sigma_2$}; 
\node[right of=s1p] (s2p) {$\sigma_2'$}; 
\node[right  of=eq1, node distance=2.5cm] (tau) {$\tau \approx \tau'$}; 
\path[trans] (s1) -- node[above]{$*$} (s2);
\path[trans] (s1p) -- node[below]{$*$} (s2p);
\path[trans] (s2) -- node[above]{$*$} (tau);
\path[trans] (s2p) -- node[below]{$*$} (tau);
\end{tikzpicture}
\caption{Confluence modulo equivalence}
\label{fig:conf_mod_eq}
\end{figure}

\begin{definition}[joinability modulo equivalence]
 In a state transition system $(\Sigma, \mapsto)$ two states $\sigma, \sigma' \in \Sigma$ are \emph{joinable modulo an equivalence relation $\eq$} if and only if $\exists \tau, \tau' \in \Sigma \enspace . \enspace \sigma \mapsto^* \tau \land \sigma' \mapsto^* \tau' \land \tau \eq \tau'$. We then write $\sigma \joins^\eq \sigma'$. If $\eq$ is the identity equivalence relation $=$, we write $\sigma \joins \sigma'$ and say that $\sigma$ and $\sigma'$ are \emph{joinable}.
\end{definition}

\begin{definition}[confluence modulo equivalence \cite{huet_confluent_1980}]
A state transition system $(\Sigma, \mapsto)$ is confluent modulo an equivalence relation $\eq$, if and only if for all $\sigma_1, \sigma_1', \sigma_2, \sigma_2': (\sigma_1 \eq \sigma_1') \land (\sigma_1 \mapsto^* \sigma_2) \land (\sigma_1' \mapsto^* \sigma_2') \rightarrow (\sigma_1' \joins^\eq \sigma_2')$.
\end{definition}
If $\eq$ is the state equivalence relation $=$, confluence modulo $=$ coincides with basic confluence. For terminating transition systems, it suffices to show \emph{local confluence}, as we will see in the following definition and theorem. 
\begin{definition}[local confluence \cite{huet_confluent_1980}]
\label[definition]{def:alpha_beta_property}
A state transition system $(\Sigma, \mapsto)$ has the $\alpha$ and $\beta$ property w.r.t. an equivalence relation $\eq$ if and only if it satisfies the $\alpha$ and $\beta$ conditions, respectively:
\begin{description}
 \item[$\alpha$:] $\forall \sigma, \tau, \tau' \in \Sigma : \sigma \mapsto \tau \land \sigma \mapsto \tau' \rightarrow \tau \joins^\eq \tau'$.
 \item[$\beta$:] $\forall \sigma, \tau, \tau' \in \Sigma : \sigma \mapsto \tau \land \sigma \eq \tau' \rightarrow \tau \joins^\eq \tau'$.
\end{description}
A state transition system is \emph{locally confluent modulo an equivalence relation $\eq$} if and only if it has the $\alpha$ and the $\beta$ property.
\end{definition}
In the theorem of Huet \cite{huet_confluent_1980} it is shown that local confluence modulo an equivalence relation $\eq$ implies confluence modulo $\eq$ for terminating transition systems.
\begin{theorem}[Huet \cite{huet_confluent_1980}]
\label{thm:huet}
Let $(\Sigma,\mapsto)$ be a terminating transition system. For any equivalence $\eq$, $(\Sigma,\mapsto)$ is confluent modulo $\eq$ if and only if $(\Sigma,\mapsto)$ is locally confluent modulo $\eq$. 
\end{theorem}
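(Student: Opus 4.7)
The plan is to prove the two implications separately; termination is needed only for the backward direction.

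The forward direction is immediate from the definitions. Assuming confluence modulo $\eq$, property $\alpha$ follows by instantiating the confluence condition with $\sigma_1 = \sigma_1' = \sigma$ (using reflexivity of $\eq$) and single-step reductions. Property $\beta$ follows by taking $\sigma_1 = \sigma$ with $\sigma \mapsto \tau$, $\sigma_1' = \sigma_2' = \tau'$, and $\sigma_1' \mapsto^* \sigma_2'$ interpreted as a zero-step reduction.

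For the backward direction, I would carry out Noetherian induction on the transitive closure $\mapsto^+$, which is well-founded by termination. The statement proved at each $\sigma_1$ is $P(\sigma_1) : \forall \sigma_1', \sigma_2, \sigma_2'.\ (\sigma_1 \eq \sigma_1') \land (\sigma_1 \mapsto^* \sigma_2) \land (\sigma_1' \mapsto^* \sigma_2') \Rightarrow \sigma_2 \joins^\eq \sigma_2'$. The inductive proof proceeds by case analysis on the first steps of the two derivations, constructing the closing diagram by repeated applications of $\alpha$ to merge two one-step reducts of a common state, applications of $\beta$ to push a reduction across an equivalence link, and invocations of the induction hypothesis at strictly smaller apexes. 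For instance, decompose $\sigma_1 \mapsto \rho \mapsto^* \sigma_2$ and $\sigma_1' \mapsto \rho' \mapsto^* \sigma_2'$, apply $\beta$ to $\sigma_1 \mapsto \rho$ and $\sigma_1 \eq \sigma_1'$ to obtain joining states $\mu \eq \mu'$ with $\rho \mapsto^* \mu$ and $\sigma_1' \mapsto^* \mu'$, and close the resulting sub-diagrams using the induction hypothesis applied at $\rho$ and at intermediate states along $\sigma_1' \mapsto^* \mu'$.

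The main obstacle is precisely the inductive bookkeeping: the equivalence step $\sigma_1 \eq \sigma_1'$ does not come with any ordering information, so one must carefully design the induction measure — typically a multiset or lexicographic ordering over the states appearing in the two derivations — to ensure that every recursive invocation is justified at an apex strictly $\mapsto^+$-below $\sigma_1$. Moreover, each recursive call produces fresh intermediate states that must be glued back to the remaining pieces of the two derivations while the equivalence links obtained from $\beta$ propagate coherently through the resulting chain of joinability diagrams. This is the delicate part of Huet's original argument, carried out for arbitrary abstract transition systems in \cite{huet_confluent_1980}; since the theorem does not depend on any CHR-specific structure, we would cite that paper in place of reproducing its proof.
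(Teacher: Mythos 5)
The paper gives no proof of this theorem at all---it is imported verbatim from \cite{huet_confluent_1980}---so your decision to verify the easy forward direction directly and defer the Noetherian-induction argument for the backward direction to Huet's original paper matches the paper exactly. Your sketch is consistent with Huet's actual argument (the forward direction via instantiation with reflexivity and zero-step reductions is complete and correct), and you rightly flag the one genuinely delicate point, namely that the equivalence links carry no ordering information and so the well-founded measure for the induction must be designed to absorb them.
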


\subsection{Constraint Handling Rules}
\label{sec:preliminaries:chr}

We now define the state transition system of CHR. We begin with CHR states.
\begin{definition}[CHR state]
A \emph{CHR state} is a tuple $\langle \mathbb{G} ; \mathbb{C} ; \mathbb{V} \rangle$ where the \emph{goal} $\mathbb{G}$ is a multi-set of constraints, the \emph{built-in constraint store} $\mathbb{C}$ is a conjunction of built-in constraints and $\mathbb{V}$ is a set of \emph{global variables}. All variables occurring in a state that are not global are called \emph{local variables} \cite[p. 33 et seq., def. 8.1]{raiser_phdthesis10}. If the contents of $\mathbb{G}$ and $\mathbb{V}$ are empty or clear from the context, we might omit them. 
\end{definition}

CHR states can be modified by rules that together form a CHR program. 
\begin{definition}[CHR program]
A \emph{CHR program} is a finite set of so-called simpagation rules of the form $r : H_\mathrm{k} ~\backslash~ H_\mathrm{r} \Leftrightarrow G ~|~ B_\mathrm{c} , B_\mathrm{b}$
 where $r$ is an optional rule name, the heads $H_\mathrm{k}$ and $H_\mathrm{r}$ are multi-sets of CHR constraints, the guard $G$ is a conjunction of built-in constraints and the body is a multi-set of CHR constraints $B_\mathrm{c}$ and a conjunction of built-in constraints $B_\mathrm{b}$. If $G$ is empty, it is interpreted as the built-in constraint $\top$. 
 
We introduce short forms for the following special cases :
\begin{description}
 \item[Simplification Rules] If $H_\mathrm{k} = \emptyset$, we also write $H_\mathrm{r} \Leftrightarrow G ~|~ B_\mathrm{c} , B_\mathrm{b}$.
 \item[Propagation Rules] If $H_\mathrm{r} = \emptyset$, we also write $H_\mathrm{k} \Rightarrow G ~|~ B_\mathrm{c}, B_\mathrm{b}$.
\end{description}
\end{definition}

Informally, a rule is applicable, if the heads match constraints from the goal store $\mathbb{G}$ and the guard holds, i.e. is a consequence of the built-in constraints $\mathbb{C}$. In that case, the state is rewritten: The constraints matching the part $H_\mathrm{r}$ of the head are removed and the constraints matching $H_\mathrm{k}$ are kept. The user-defined body constraints $B_\mathrm{c}$ are added to the goal store $\mathbb{G}$, the built-in body constraints $B_\mathrm{b}$ and the constraints from the guard $G$ are added to the built-in store $\mathbb{C}$.

\begin{example}[Multi-Set Items \cite{christiansen_2017}]
\label{ex:set_items}
Consider the following small CHR program, that collects items represented in individual $\constr{item}/1$ constraints to a multi-set represented by a constraint of the form $\constr{mset}(L)$ where $L$ is a list of items. The program has the following rule:
\begin{equation*}
\constr{mset}(L), \constr{item}(A) \Leftrightarrow \constr{mset}([A|L]).
\end{equation*}
For the initial constraint store $\constr{item}(a), \constr{item}(b), \constr{mset}([])$ the program can apply the rule on $\constr{mset}([])$ and $\constr{item}(b)$ which results in the constraint store $\constr{set}([b]), \constr{item}(a)$. The rule can be applied again to this state,  resulting in the constraint store $\constr{mset}([a,b])$. However, the same program can also yield the constraint store $\constr{mset}([b,a])$. Hence, the program is not confluent. In the following, we will go back to this running example and provide an invariant and equivalence relation together with a proof method to show that the program is actually confluent modulo the equivalence relation w.r.t. the invariant.
\end{example}

In the context of the operational semantics, we assume a constraint theory $\mathcal{CT}$ for the interpretation of the built-in constraints. We define an equivalence relation over CHR states.
\begin{definition}[state equivalence \cite{raiser_phdthesis10,raiser_betz_fru_equivalence_revisited_chr09}]
\label[definition]{def:chr_state_equiv}
Let $\rho_i := \langle \mathbb{G}_i ; \mathbb{C}_i ; \mathbb{V}_i \rangle$ for $i = 1,2$ be two CHR states with local variables $\bar{y_1},\bar{y_2}$ that have been renamed apart. $\rho_1 \equiv \rho_2$ if and only if 
$\mathcal{CT} \models \forall(\mathbb{C}_1 \rightarrow \exists \bar{y_2}.((\mathbb{G}_1 = \mathbb{G}_2) \land \mathbb{C}_2)) \land \forall(\mathbb{C}_2 \rightarrow \exists \bar{y_1}.((\mathbb{G}_1 = \mathbb{G}_2) \land \mathbb{C}_1))$ where $\forall F$ is the universal closure of formula $F$.
The equivalence class of a CHR state is defined as as $[\rho] := \{ \rho' ~|~ \rho' \equiv \rho \}$.
\end{definition}

\begin{example}[state equivalence]
By the above definition of state equivalence, the following states are equivalent \cite[p. 34]{raiser_phdthesis10}:
\begin{itemize}
 \item $\langle c(X); \top ; \emptyset \rangle \equiv \langle c(Y); \top ; \emptyset \rangle$, i.e. local variables can be renamed.
 \item $\langle c(X); X{=}0 ; \{ X \} \rangle \equiv \langle c(0); X{=}0 ; \{X\} \rangle$, i.e. variable bindings from the built-in store can be applied to the goal store.
 \item $\langle \emptyset; X{=}Y \land Y{=}0 ; \emptyset \rangle \equiv \langle \emptyset; X{=}0 \land Y{=}0 ; \emptyset \rangle$, i.e. equivalent built-in stores can be interchanged.
 \item $\langle c(0); \top ; \{X\} \rangle \equiv \langle c(0); \top ; \emptyset \rangle$, i.e. unused global variables can be omitted.
 \item However, $\langle c(X); \top ; \{X\} \rangle \not\equiv \langle c(Y); \top ; \{Y\} \rangle$, i.e. $X$ and $Y$ are free variables and therefore the logical readings of the states are different. Global variables can be used to bridge information between two states. 
\end{itemize}
\end{example}

The operational semantics is now defined by the following transition scheme over equivalence classes of CHR states.
\begin{definition}[operational semantics of CHR \cite{raiser_phdthesis10,raiser_betz_fru_equivalence_revisited_chr09}]
\label[definition]{def:chr_operational_semantics}
For a CHR program, the state transition system over CHR states and the rule transition relation $\mapsto$ is defined as the following transition scheme:
\begin{equation*}
\frac
 {
   r : H_\mathrm{k} ~\backslash~ H_\mathrm{r} \Leftrightarrow G ~|~ B_\mathrm{c} , B_\mathrm{b}
 }
 {
 [\langle H_\mathrm{k} \uplus H_\mathrm{r} \uplus \mathbb{G} ; G \land \mathbb{C} ; \mathbb{V} \rangle ] \mapsto^r [ \langle H_\mathrm{k} \uplus B_\mathrm{c} \uplus \mathbb{G} ; G \land B_\mathrm{b} \land \mathbb{C} ; \mathbb{V} \rangle ]
 }
\end{equation*}
Thereby, $r$ is a variant of a rule in the program such that its local variables are disjoint from the variables occurring in the representative of the pre-transition state. We may just write $\mapsto$ instead of $\mapsto^r$ if the rule $r$ is clear from the context.
\end{definition}
From now on, we only consider equivalence classes of CHR states, since the state transition system is defined over equivalence classes.

An important analysis technique is the merging of states.
\begin{definition}[merge operator $\merge$]
\label[definition]{def:merge}
Let $\sigma_i = \langle \mathbb{G}_i ; \mathbb{B}_i ; \mathbb{V}_i \rangle$ for $i = 1, 2$ be two CHR states such that local variables of one state are disjoint from all variables in the other state. Then for a set $\mathbb{V}$ of variables
\begin{equation*}
\sigma_1 \merge_\mathbb{V} \sigma_2 := \langle \mathbb{G}_1 \uplus \mathbb{G}_2 ; \mathbb{B}_1 \land \mathbb{B}_2 ; (\mathbb{V}_1 \cup \mathbb{V}_2) \setminus \mathbb{V} \rangle.
\end{equation*}
For equivalence classes of CHR states, the merging is defined as $[\sigma_1] \merge_\mathbb{V} [\sigma_2] := [\sigma_1 \merge_\mathbb{V} \sigma_2]$
for two representatives of the equivalence class that have disjoint variables. For $\mathbb{V} = \emptyset$ we write $[\sigma_1] \merge [\sigma_2]$
\cite[p. 50, def. 10.1]{raiser_phdthesis10}.
\end{definition}

Since local variables have to be disjoint when merging two states, it is not possible to extract information about them directly. For instance, $[\langle c(X); X{=}1; \emptyset \rangle]$ is the version of $[\langle c(X); \top; \emptyset \rangle]$ with the local variable $X$, where $X$ is bound to the number 1. In the state $[\langle c(X), X{=}1, \emptyset \rangle]$, we would consider $X{=}1$ as contextual information about the local variable $X$. It is not possible to extract this information by $[\langle c(X); \top; \emptyset\rangle] \merge [\langle \emptyset; X{=}1; \emptyset \rangle]$,
since $[\langle \emptyset; X{=}1; \emptyset \rangle] = [\langle \emptyset; \top; \emptyset\rangle] = [\sigma_\emptyset]$, i.e. the empty state. Hence, the result of merging the two states is $[\langle c(X); \top; \emptyset\rangle]$ although we would like to see the result $[\langle c(X); X{=}1; \emptyset \rangle]$.

It is necessary to rather make $X$ a global variable first that is reduced by the merge operator $\merge_{\{X\}}$:
\begin{equation*}
   [\langle c(X); \top; \{X\} \rangle] \merge_{\{X\}} [\langle\emptyset; X{=}1; \{X\} \rangle] = [\langle c(X); X{=}1; \emptyset \rangle] = [\langle c(1); \top; \emptyset \rangle].
\end{equation*}
Global variables can thus be used to share information between two states that are merged. \cite[p. 50, ex. 10.2]{raiser_phdthesis10} In the confluence criterion in \Cref{sec:testing_conf_mod_eq}, we only generate states from the program source code where all variables are global. %As demonstrated above, it suffices to consider those states to reason about all other states, as variables can be made local by the merge operator.

In general, $\merge_\mathbb{V}$ is not associative. However, the following lemma shows a restricted form of associativity that is used in the proof of the confluence modulo equivalence criterion in \Cref{sec:testing_conf_mod_eq}.
\begin{lemma}
\label[lemma]{lemma:special_assoc_merge}
Let $\sigma_1, \sigma_2, \sigma_3$ be CHR states such that no local variable of a state occurs in another state. Then $[\sigma_1] \merge_\mathbb{V} ([\sigma_2] \merge [\sigma_3]) = ([\sigma_1] \merge [\sigma_2]) \merge_\mathbb{V} [\sigma_3]$ holds for all $\mathbb{V}$ \cite[p. 52, lemma 10.7]{raiser_phdthesis10}.
\end{lemma}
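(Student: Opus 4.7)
The plan is to unfold the definition of $\merge_\mathbb{V}$ on both sides of the equation and verify that, with appropriately chosen representatives, the resulting CHR states coincide as tuples, which immediately yields equality of the equivalence classes.

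By hypothesis we may fix representatives $\sigma_i = \langle \mathbb{G}_i ; \mathbb{B}_i ; \mathbb{V}_i \rangle$ for $i = 1, 2, 3$ whose local variables are pairwise disjoint and disjoint from all variables of the other two states. This ensures that every merge occurring on either side of the equation satisfies the disjoint-variable precondition of \Cref{def:merge}, so each merge may be evaluated as the syntactic tuple operation on representatives rather than only on equivalence classes.

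For the left-hand side, the inner merge (with parameter $\emptyset$) yields $\sigma_2 \merge \sigma_3 = \langle \mathbb{G}_2 \uplus \mathbb{G}_3 ; \mathbb{B}_2 \land \mathbb{B}_3 ; \mathbb{V}_2 \cup \mathbb{V}_3 \rangle$; no global variables are dropped here. Its local variables are exactly those of $\sigma_2$ and $\sigma_3$ and thus remain disjoint from the variables of $\sigma_1$, so applying the outer $\merge_\mathbb{V}$ is legal and produces $\langle \mathbb{G}_1 \uplus \mathbb{G}_2 \uplus \mathbb{G}_3 ; \mathbb{B}_1 \land \mathbb{B}_2 \land \mathbb{B}_3 ; (\mathbb{V}_1 \cup \mathbb{V}_2 \cup \mathbb{V}_3) \setminus \mathbb{V} \rangle$. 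A symmetric unfolding of the right-hand side gives the identical tuple, since $\uplus$ and $\land$ are associative (the latter at least up to $\mathcal{CT}$) and $\cup$ is associative with $\mathbb{V}$ subtracted only at the outermost step. Equality of the representatives entails equivalence of the states and hence equality of the equivalence classes.

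The only real subtlety, and the reason this is a restricted rather than a general associativity, is that $\mathbb{V}$ is applied only at the outer merge on both sides. If $\mathbb{V}$ were instead applied to the inner merge, variables in $\mathbb{V}$ could be moved out of the global-variable component prematurely and would no longer participate in the subsequent merge, so the tuples on the two sides could disagree on the global-variable set. Thus the main obstacle in writing the proof carefully is propagating the variable-disjointness assumption correctly through each successive merge so that each unfolding step is legal; once representatives with pairwise disjoint local variables are fixed, the remainder is a direct calculation.
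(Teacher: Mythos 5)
Your proof is correct: the paper does not prove this lemma itself but imports it from Raiser's thesis \cite[p.~52, lemma~10.7]{raiser_phdthesis10}, and your argument---fixing variable-disjoint representatives and unfolding $\merge_\mathbb{V}$ so that both sides reduce to the same tuple $\langle \mathbb{G}_1 \uplus \mathbb{G}_2 \uplus \mathbb{G}_3 ; \mathbb{B}_1 \land \mathbb{B}_2 \land \mathbb{B}_3 ; (\mathbb{V}_1 \cup \mathbb{V}_2 \cup \mathbb{V}_3) \setminus \mathbb{V} \rangle$---is exactly the standard calculation behind the cited result. Your closing remark that $\mathbb{V}$ is subtracted only at the outermost merge on both sides correctly pinpoints why this is a restricted rather than a general associativity law.
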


\subsection{Confluence of CHR Programs}
\label{sec:confluence_chr}

The idea of the confluence criterion is to exploit the \emph{monotonicity} property of CHR, i.e. that all rules applicable in one state are applicable in any larger state. 
\begin{lemma}[monotonicity] 
\label[lemma]{lemma:monotonicity}
If $[\sigma] \mapsto [\tau]$, then $[\sigma] \merge_\mathbb{V} [\sigma'] \mapsto [\tau] \merge_\mathbb{V} [\sigma']$ for all $\mathbb{V}$ \cite[p. 51, lemma 10.4]{raiser_phdthesis10}.
\end{lemma}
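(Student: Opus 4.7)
The plan is to unfold the definition of the transition scheme (\Cref{def:chr_operational_semantics}) on the left-hand side, perform the merge, and then recognise the result as an instance of the same scheme applied to the merged state. The only real work lies in managing variable freshness so that both the transition witnessed on $[\sigma]$ and the merge with $[\sigma']$ can be carried out on the same syntactic level.

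Concretely, I would start from the hypothesis $[\sigma]\mapsto[\tau]$ and pick a representative of $[\sigma]$ matching the numerator pattern of \Cref{def:chr_operational_semantics}: that is, some rule variant $r:H_\mathrm{k}\,\backslash\,H_\mathrm{r}\Leftrightarrow G\mid B_\mathrm{c},B_\mathrm{b}$ together with a representative $\sigma=\langle H_\mathrm{k}\uplus H_\mathrm{r}\uplus\mathbb{G};\,G\land\mathbb{C};\,\mathbb{V}_{\!\sigma}\rangle$, such that $\tau\equiv\langle H_\mathrm{k}\uplus B_\mathrm{c}\uplus\mathbb{G};\,G\land B_\mathrm{b}\land\mathbb{C};\,\mathbb{V}_{\!\sigma}\rangle$ and the rule's local variables are disjoint from those of $\sigma$. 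Next I would pick a representative $\sigma'=\langle\mathbb{G}';\mathbb{C}';\mathbb{V}_{\!\sigma'}\rangle$ of $[\sigma']$ whose local variables are disjoint both from those of $\sigma$ and from the local variables of the chosen rule variant; this is always possible by $\alpha$-renaming within the equivalence class (\Cref{def:chr_state_equiv}).

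With these choices, $\merge_\mathbb{V}$ is directly computable on representatives by \Cref{def:merge}:
\begin{equation*}
\sigma\merge_\mathbb{V}\sigma'=\langle H_\mathrm{k}\uplus H_\mathrm{r}\uplus\mathbb{G}\uplus\mathbb{G}';\,G\land\mathbb{C}\land\mathbb{C}';\,(\mathbb{V}_{\!\sigma}\cup\mathbb{V}_{\!\sigma'})\setminus\mathbb{V}\rangle.
\end{equation*}
Since the rule's local variables remain disjoint from those of this representative, the transition scheme applies verbatim with the enlarged ``context'' goal $\mathbb{G}\uplus\mathbb{G}'$ and enlarged built-in context $\mathbb{C}\land\mathbb{C}'$, yielding a transition to
\begin{equation*}
\langle H_\mathrm{k}\uplus B_\mathrm{c}\uplus\mathbb{G}\uplus\mathbb{G}';\,G\land B_\mathrm{b}\land\mathbb{C}\land\mathbb{C}';\,(\mathbb{V}_{\!\sigma}\cup\mathbb{V}_{\!\sigma'})\setminus\mathbb{V}\rangle,
\end{equation*}
which is literally a representative of $[\tau]\merge_\mathbb{V}[\sigma']$.

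The main obstacle is purely bookkeeping: ensuring that the same rule variant used to witness $[\sigma]\mapsto[\tau]$ can be re-used on the merged state, and that the merge of equivalence classes is well-defined on the chosen representatives. Both are handled by a single preliminary renaming step putting the local variables of $\sigma$, of $\sigma'$, and of the rule variant pairwise disjoint; after that, the equality of the two states follows by simple reassociation of $\uplus$ and $\land$. No entailment reasoning about the guard $G$ is needed, because in the scheme of \Cref{def:chr_operational_semantics} the guard is carried into the built-in store of the pre-state rather than discharged, so adding $\mathbb{C}'$ on both sides of the transition preserves the match.
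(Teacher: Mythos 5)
Your proof is correct and is essentially the standard argument behind the cited result: the paper gives no proof of \Cref{lemma:monotonicity} itself but imports it from Raiser's thesis, where it is established exactly as you do --- unfold the transition scheme of \Cref{def:chr_operational_semantics} on a representative of $[\sigma]$, merge with a variable-disjoint representative of $[\sigma']$, and recognise the enlarged state as another instance of the same scheme with context $\mathbb{G}\uplus\mathbb{G}'$ and built-in store $\mathbb{C}\land\mathbb{C}'$. The one bookkeeping detail worth making explicit is that the rule variant's local variables (including the fresh body variables introduced by $B_\mathrm{c}$ and $B_\mathrm{b}$) must also be kept apart from the \emph{global} variables of $\sigma'$, not only from its local ones, so that both the rule application and the final merge $[\tau]\merge_\mathbb{V}[\sigma']$ are well-defined; this is absorbed by re-choosing the rule variant in the same preliminary renaming step you describe.
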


\subsubsection{Basic Confluence Test}

Monotonicity allows us to reason from states about larger states. The idea of the basic confluence test is to construct a finite set of \emph{rule states} that consist of the head and guard constraints of a rule and then overlap them with all other rule states. Intuitively, overlapping two rules means that a state is constructed where parts of the rule heads are equated (if possible) and the rest is just included. In such a state, both rules are applicable. 

By applying the overlapping rules to the overlap state, we get a \emph{critical pair}. Thereby, one state is the result after applying the first overlapping rule to the overlap state and the other state is the result after applying the second rule to the overlap state. If all critical pairs are joinable, the program is locally confluent. In the following, we formalize this idea. The definitions are taken from \cite{raiser_phdthesis10}. Similar definitions can be found in \cite{fru_chr_book_2009}.
\begin{definition}[rule state] 
\label[definition]{def:rule_state}
For a rule $r : H_\mathrm{k} ~\backslash~ H_\mathrm{r} \Leftrightarrow G ~|~ B_c, B_b$ let $\mathbb{V}$ be the variables occurring in $H_\mathrm{k}, H_\mathrm{r}$ and $G$. Then the state $\langle H_\mathrm{k} \uplus H_\mathrm{r} ; G ; \mathbb{V} \rangle$ is called the \emph{rule state} of $r$. In the literature, the rule states are sometimes called \emph{minimal states}. 
\cite[p. 78, def. 13.8]{raiser_phdthesis10}
\end{definition}

\begin{definition}[overlap] 
\label[definition]{def:overlap}
For any two (not necessarily different) rules of a CHR program of the form
%\begin{align*}
$r_1 : H_\mathrm{k} ~\backslash~ H_\mathrm{r} \Leftrightarrow G ~|~ B_\mathrm{c} , B_\mathrm{b},$% &  &
$r_2 : H_\mathrm{k}' ~\backslash~ H_\mathrm{r}' \Leftrightarrow G' ~|~ B_\mathrm{c}' , B_\mathrm{b}'$
%\end{align*}
and with variables that are renamed apart, let
%\begin{align*}
 $O_\mathrm{k} \subseteq H_\mathrm{k}, O_\mathrm{r} \subseteq H_\mathrm{r},$% & & 
 $O_\mathrm{k}' \subseteq H_\mathrm{k}', O_\mathrm{r}' \subseteq H_\mathrm{r}'$
%\end{align*}
be subsets of the heads of the rules such that for
%\begin{equation*}
$B := ((O_\mathrm{k} \uplus O_\mathrm{r}) = (O_\mathrm{k}' \uplus O_\mathrm{r}')) \land G \land G'$                                                      
%\end{equation*}
it holds that $\mathcal{CT} \models \exists.B$ and $(O_\mathrm{r} \uplus O_\mathrm{r}') \neq \emptyset$, where $\exists.B$ is the existential closure over $B$. Then the state
%\begin{equation*}
 $\sigma := \langle K \uplus K' \uplus R \uplus R' \uplus O_\mathrm{k} \uplus O_\mathrm{r} ; B ; \mathbb{V} \rangle$
%\end{equation*}
is called an \emph{overlap} of $r_1$ and $r_2$ where $\mathbb{V}$ is the set of all variables occurring in heads and guards of both rules and
%\begin{align*}
 $K := H_\mathrm{k} \setminus O_\mathrm{k}$, %~
 $K' := H_\mathrm{k}' \setminus O_\mathrm{k}'$, %& & 
 $R := H_\mathrm{r} \setminus O_\mathrm{r}$, %~
 $R' := H_\mathrm{r}' \setminus O_\mathrm{r}'$.
%\end{align*}
The pair of states $(\sigma_1, \sigma_2)$ with $\sigma_1 := \langle K \uplus K' \uplus R' \uplus O_\mathrm{k}  \uplus B_\mathrm{c}  ; B \land B_\mathrm{b}  ; \mathbb{V} \rangle$ and $\sigma_2 := \langle K \uplus K' \uplus R  \uplus O_\mathrm{k}' \uplus B_\mathrm{c}' ; B \land B_\mathrm{b}' ; \mathbb{V} \rangle$
is called \emph{critical pair} of the overlap $\sigma$. The critical pair can be obtained by applying the rules to the overlap state.
\cite[p. 82, def. 14.5]{raiser_phdthesis10}
\end{definition}

\subsubsection{Invariant-Based Confluence Test}

The idea to exploit monotonicity fails, when invariants on the states are introduced. A property $\inv$ is an invariant if and only if for all states $[\sigma]$ where $\inv([\sigma])$ holds and for all $[\tau]$ with $[\sigma] \mapsto^* [\tau]$ the invariant $\inv([\tau])$ holds as well.

If in the confluence test a constructed overlap does not satisfy the invariant, then this overlap state is not part of the transition system and therefore no information can be gained from analyzing it. It is not possible to just ignore such states as there are invariants that are not satisfied in an overlap state, but might be satisfied in a larger state. There are also invariants, that cannot be satisfied by state extension (c.f. \Cref{ex:set_items_inv}). %In \Cref{sec:examples}, there are some invariants that can be satisfied by extending a state and others where this is not possible.

Nevertheless, the idea to use overlap states for confluence analysis can be generalized, such that it can be used for invariant-based confluence. For this purpose, for an invariant $\inv$ and an overlap state $[\sigma]$ the set of all extensions of $[\sigma]$ such that $\inv$ holds -- denoted by $\Sigma^\inv([\sigma])$ -- is considered. As this set usually is infinitely large, we want to extract a set of minimal elements of $\Sigma^\inv([\sigma])$, called $\mathcal{M}^\inv([\sigma])$, that have to be considered to show local confluence w.r.t. $\inv$. However, for this purpose a partial order on states has to be defined. The set $\mathcal{M}^\inv([\sigma])$ is finite for many invariants, but there are examples of invariants that lead to infinite sets of minimal elements.

In \cite{duck_stuck_sulz_observable_confluence_iclp07,duck_stuckey_sulzmann_observable_confl_chr06,raiser_phdthesis10} the following has been proven: If we can show that for all overlap states $[\sigma]$ of a terminating program the critical pairs derived from all states in $\mathcal{M}^\inv([\sigma])$ are joinable, the program is confluent w.r.t. to $\inv$.

We now give formal definitions of the notions used in the above description. Since it is a commutative monoid, a partial order can be derived from the merge operator \cite{raiser_phdthesis10}:
\begin{lemma}[partial order $\ltstate$]
For the set of CHR states $\Sigma$, the relation $\ltstate : \Sigma \times \Sigma$ defined as $[\sigma] \ltstate [\sigma'] \text{ if and only if } \exists [\hat{\sigma}] ~.~ [\sigma] \merge [\hat{\sigma}] = [\sigma']$ where $\sigma, \sigma' \in \Sigma$ is a partial order. \cite[p. 53, lemma 10.8]{raiser_phdthesis10}
\end{lemma}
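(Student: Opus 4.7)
The plan is to verify the three defining properties of a partial order. The background fact is that $\merge = \merge_\emptyset$ turns the set of CHR-state equivalence classes into a commutative monoid whose identity is the empty state $[\sigma_\emptyset] := [\langle\emptyset; \top; \emptyset\rangle]$: commutativity is inherited from the multiset union $\uplus$ and conjunction $\land$, while associativity follows from \Cref{lemma:special_assoc_merge} specialized to $\mathbb{V} = \emptyset$, provided representatives are chosen with pairwise renamed-apart local variables.

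Reflexivity is immediate: $[\sigma_\emptyset]$ witnesses $[\sigma] \ltstate [\sigma]$ since $[\sigma] \merge [\sigma_\emptyset] = [\sigma]$. For transitivity, given $[\sigma_1] \merge [\hat\sigma] = [\sigma_2]$ and $[\sigma_2] \merge [\hat\sigma'] = [\sigma_3]$, I would pick representatives of $\hat\sigma$ and $\hat\sigma'$ whose local variables are disjoint from one another and from $\sigma_1$, then apply associativity to obtain $[\sigma_1] \merge ([\hat\sigma] \merge [\hat\sigma']) = [\sigma_3]$, so $[\hat\sigma] \merge [\hat\sigma']$ witnesses $[\sigma_1] \ltstate [\sigma_3]$.

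Antisymmetry is the main obstacle: it essentially amounts to showing that $\merge$ is \emph{positive}, i.e., $[\sigma] \merge [\tau] = [\sigma]$ forces $[\tau] = [\sigma_\emptyset]$. Given $[\sigma_1] \merge [\hat\sigma] = [\sigma_2]$ and $[\sigma_2] \merge [\hat\sigma'] = [\sigma_1]$, I would first compare the cardinalities of goal multisets: the multiset equation $(\mathbb{G}_1 = \mathbb{G}_2)$ in \Cref{def:chr_state_equiv} preserves cardinality, so the two given equalities yield $|\mathbb{G}_2| = |\mathbb{G}_1| + |\hat{\mathbb{G}}|$ and $|\mathbb{G}_1| = |\mathbb{G}_2| + |\hat{\mathbb{G}}'|$, forcing $\hat{\mathbb{G}} = \hat{\mathbb{G}}' = \emptyset$. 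It remains to argue that the built-in parts collapse: unfolding \Cref{def:chr_state_equiv} for both given equalities produces two $\mathcal{CT}$-entailments whose combination gives $\mathbb{B}_1 \leftrightarrow \mathbb{B}_2$ (after existentially projecting out the local variables introduced by $\hat\sigma$ and $\hat\sigma'$), and since the goal multisets already match pointwise, this together with a matching of global variable sets yields $[\sigma_1] = [\sigma_2]$.

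The step I expect to require the most care is this last one: the bookkeeping of local versus global variables under the existential quantifiers in \Cref{def:chr_state_equiv}. Sloppy variable management could give a merely logical entailment at the level of built-ins rather than the state equivalence that is actually required. Choosing representatives with disjoint local variables before unfolding the definitions, and exploiting the fact that unused variables may be dropped from the global set (as illustrated in the state-equivalence examples), are what make the final collapse from two logical entailments to $[\sigma_1] = [\sigma_2]$ go through.
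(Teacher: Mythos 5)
The paper does not actually prove this lemma: it is imported verbatim from Raiser's thesis (the cited lemma 10.8), with only the remark that the partial order is derived from the commutative-monoid structure of $\merge$. So your direct verification of the three axioms is not "the paper's route" but a reconstruction of the thesis proof. Your reflexivity and transitivity arguments are sound: the empty state $[\sigma_\emptyset]$ is the neutral element, and \Cref{lemma:special_assoc_merge} with $\mathbb{V}=\emptyset$ gives exactly the associativity instance needed to compose witnesses.

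The antisymmetry step, however, contains a genuine gap. Both the "positivity" claim and the claim that the multiset equation in \Cref{def:chr_state_equiv} preserves goal-store cardinality are false for \emph{failed} states, i.e.\ states with an inconsistent built-in store: there both implications in \Cref{def:chr_state_equiv} have the antecedent $\bot$ and hold vacuously, so for instance $\langle \{c\};\bot;\emptyset\rangle \equiv \langle \emptyset;\bot;\emptyset\rangle$, and merging the failed class with any state returns the failed class without that state being $[\sigma_\emptyset]$. Hence the equation $|\mathbb{G}_2| = |\mathbb{G}_1| + |\hat{\mathbb{G}}|$ you rely on is not available in general. Antisymmetry itself survives, but only via a case split you do not make: if $[\sigma_1]$ is the (unique) failed class then so is $[\sigma_2]$, since conjoining built-in stores preserves inconsistency, and the two classes coincide trivially; only in the consistent case does your cardinality argument, followed by the mutual entailment of the built-in stores and the bookkeeping of unused global variables, go through. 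Given that the paper explicitly recalls that an earlier candidate ordering for this purpose failed to be a partial order, this edge case is exactly the kind of detail that should not be waved through; add the failed/consistent case distinction and the rest of your argument stands.
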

In \cite{duck_stuck_sulz_observable_confluence_iclp07,duck_stuckey_sulzmann_observable_confl_chr06}, another partial order has been defined. However, it has been shown that the relation defined there is not a partial order by mistake \cite{raiser_phdthesis10}. Therefore, we use the partial order that has first been introduced in \cite[p. 53, lemma 10.8]{raiser_phdthesis10} to avoid these problems. 

For an invariant, we can now define the set of minimal elements that extend a state such that the invariant does hold.
\begin{definition}[minimal elements]
For an invariant $\inv$, let the set $\Sigma^\inv([\sigma]) := \{ [\sigma'] ~|~ \inv([\sigma \merge \sigma']) \land \sigma' \text{ has no local variables } \}$. The set $\mathcal{M}^\inv([\sigma])$ is the set of $\ltstate$-minimal elements of  $\Sigma^\inv([\sigma])$, i.e.
\begin{equation*}
\forall [\sigma'] \in \Sigma^\inv([\sigma]) ~.~ \exists [\sigma_\mathrm{m}] \in \mathcal{M}^\inv([\sigma]) ~.~ [\sigma_\mathrm{m}] \ltstate [\sigma'].
\end{equation*}
\cite[p. 80, def. 13.11]{raiser_phdthesis10}
\end{definition}

Note that for an invariant $\inv$ and a state $[\sigma]$ where $\inv([\sigma])$ holds, the set of minimal extensions is $\mathcal{M}^\inv([\sigma]) = \{ [\sigma_\emptyset] \}$, where $\sigma_\emptyset := \langle \emptyset ; \top ; \emptyset \rangle$ is the empty state \cite[p.80, lemma 13.13]{raiser_phdthesis10}. The invariant-based confluence test then coincides with the basic confluence criterion. In \Cref{sec:testing_conf_mod_eq} we generalize the idea of the invariant-based confluence test for invariant-based confluence modulo equivalence.

\begin{example}[Multi-Set Items (cont.)]
\label{ex:set_items_inv}
For the multi-set program from \Cref{ex:set_items}, the following problem arises: If there is more than one $\constr{mset}$ constraint, the program can choose non-deterministically where to add an item. Therefore, it cannot be confluent. For instance, the following transitions in shorthand notation is possible: $\constr{mset}([a]), \constr{mset}([b]), \constr{item}([c])$ can either end in the final state  $\constr{mset}([a,c]), \constr{mset}([b])$ or $\constr{mset}([a]), \constr{mset}([b,c])$.

The problem can be solved by introducing the multi-set invariant $\mathcal{S}$: In every CHR state there is at most one $\constr{mset}(\_)$ constraint. Note that the set of minimal extensions $\mathcal{M}^{\mathcal{S}}([\sigma]) = \emptyset$ for all states $[\sigma]$, as there are no extensions for states that do not satisfy the invariant (i.e. where there is more than one $\constr{mset}$ constraint) such that the invariant is satisfied (i.e. there is at most one $\constr{mset}$ constraint).
\end{example}

\section{Compatibility of Equivalence Relations}
\label{sec:equivalence_relations}

In this section, we motivate a restriction of equivalence relations that make confluence modulo equivalence analysis manageable. Note that in the context of confluence modulo equivalence, typically user-defined equivalence relations different from state equivalence (c.f. \Cref{def:chr_state_equiv}) are regarded. State equivalence is referred to by $\equiv$ or by the corresponding equivalence class brackets $[\cdot]$. The symbol $\eq$ denotes some general user-defined equivalence relation that is potentially different from $\equiv$ (but is not required to).

In the confluence criterion, we want to use the idea to exploit monotonicity of CHR to reason from small states that come from the rules in the program over all states. However, monotonicity can be broken by user-defined equivalence relations. This means that in general for two states with $[\sigma] \eq [\sigma']$, it is possible that an extension with $[\tau] \eq [\tau']$ leads to states that are not equivalent, i.e.  $[\sigma] \merge_\mathbb{V} [\tau] \noteq [\sigma'] \merge_\mathbb{V} [\tau']$ as shown in the following example.

\begin{example}
We construct an equivalence relation that breaks monotonicity. Let $\#c : \Sigma \rightarrow \mathbb{N}_0$ be a function that returns the number of constraints $c$ in the goal store of a state.  We separate the CHR state space $\Sigma$ into two disjoint subsets:
\begin{align*}
\Sigma_1 :=& \{ [\sigma] ~|~ \#c([\sigma]) < 3 \}, & &\Sigma_2 :=& \{ [\sigma] ~|~ \#c([\sigma]) \geq 3 \}.
\end{align*}
The partition of the state space clearly defines an equivalence relation $\eq$ with equivalence classes $\Sigma_1$ and $\Sigma_2$.

Let $[\sigma_1] = [\langle c ; \top ; \emptyset \rangle]$ and $[\sigma_2] = [\langle c, c ; \top ; \emptyset \rangle]$. Since $[\sigma_1], [\sigma_2] \in \Sigma_1$, it holds that $[\sigma_1] \eq [\sigma_2]$. Let $[\tau] = [\langle c ; \top ; \emptyset \rangle]$. If we extend the two states by $[\tau]$, the extended states are not equivalent any more:
\begin{equation*}
[\sigma_1] \merge [\tau] = [\langle c, c ; \top ; \emptyset \rangle] \in \Sigma_1\text{, but }[\sigma_2] \merge [\tau] = [\langle c, c, c ; \top ; \emptyset \rangle] \in \Sigma_2.
\end{equation*}
Hence, although $[\sigma_1] \eq [\sigma_2]$, $[\sigma_1] \merge [\tau] \noteq [\sigma_2] \merge [\tau]$. 

This case does not harm testing for the $\beta$ property, since the extended states do not have to be tested for joinability modulo equivalence according to the $\beta$ property. However, we can construct the converse case: Let $[\sigma_3] = [\langle c, c, c ; \top ; \emptyset \rangle] \in \Sigma_2$. Then $[\sigma_2] \noteq [\sigma_3]$. However, if the two states are extended by $[\tau]$, we get
\begin{align*}
[\sigma_2] \merge [\tau] &= [\langle c, c, c ; \top ; \emptyset \rangle] \in \Sigma_2 \text{, and } & 
[\sigma_3] \merge [\tau] &= [\langle c, c, c, c ; \top ; \emptyset \rangle] \in \Sigma_2.
\end{align*}
Hence, $[\sigma_2] \merge [\tau] \eq [\sigma_3] \merge [\tau]$, although $[\sigma_2] \noteq [\sigma_3]$. This is critical to the $\beta$ property: Now it is not possible any more to use a rule state and its equivalent states to reason about all states as we miss some larger state by this attempt.
\end{example}

To ensure monotonicity in the context of equivalence relations, we need equivalence to be maintained by the merge operator. The equivalence relation is then called a \emph{congruence relation} with respect to the merge operator.
\begin{definition}[congruence relation]
\label[definition]{def:congruence_relation}
An equivalence relation $\eq \subseteq A \times A$ is called a \emph{congruence relation} with respect to an operator $\circ : A \times A \rightarrow A$ if for all $x, x', y, y'$: If $x \eq x'$ and $y \eq y'$ then $x \circ y \eq x' \circ y'.$
\end{definition}

Unfortunately, this does not suffice to reason from rule states about any other state. It must be ensured that if two states $[\sigma]$ and $[\sigma']$ are equivalent and $[\sigma]$ can be decomposed into two parts, then $[\sigma']$ must be decomposable into two parts that are equivalent to the decomposition of $[\sigma]$. This ensures that when showing joinability of two small states, the larger states can still be joined, as they are syntactically decomposable into smaller joinable states.
\begin{definition}[split property]
\label[definition]{def:split_property}
An equivalence relation $\eq \subseteq A \times A$ has the \emph{split property} with respect to an operator $\circ : A \times A \rightarrow A$ if for all $x, x_1, x_2, y$: If $x = x_1 \circ x_2$ and $x \eq y$ then $\exists y_1, y_2$ such that $x_1 \eq y_1, x_2 \eq y_2$ and $y = y_1 \circ y_2$.
\end{definition}
The split property assumes a syntactic relation between two states that are equivalent under an equivalence relation. If a state can be split into two parts and is equivalent to another state, this state can be split into equivalent parts.

\begin{example}
This example defines an equivalence relation $\hat{=}$ that does not satisfy the split property. It is the smallest equivalence relation where the following two conditions hold: If $\sigma \equiv \sigma'$ then also $\sigma \hat{=} \sigma'$. Additionally, if $\langle \mathbb{G} ; \mathbb{B} ; \mathbb{V} \rangle \hat{=} \langle \mathbb{G}' ; \mathbb{B}' ; \mathbb{V}' \rangle$, then $\langle \{c,c\} \uplus \mathbb{G} ; \mathbb{B} ; \mathbb{V} \rangle \hat{=} \langle \{ d \} \uplus \mathbb{G}' ; \mathbb{B}' ; \mathbb{V}' \rangle$. Hence, all pairs of $c$ constraints can be replaced by a $d$ constraint. 

The equivalence relation obviously is a congruence relation w.r.t. $\merge$. However, it does not have the split property: Let $\sigma \equiv c,c$ be a CHR state in shorthand notation. Then $\sigma \equiv c \merge c$. By definition of $\hat{=}$, we have that $\sigma \hat{=} d$. However, there are no $\sigma_1,\sigma_2$ such that $\sigma_1 \hat{=} c$, $\sigma_2 \hat{=} c$ and $d \equiv \sigma_1 \merge \sigma_2$.

In the confluence test, for all states $\sigma$ it has to be shown that if $\sigma \equiv \sigma'$ and $\sigma \mapsto_r \tau$ then $\sigma' \joins^\eq \tau$ to satisfy the $\beta$ property. By the application of $r$ to $\sigma$, we know that for the rule state $\sigma_r$, $\sigma$ can be split into $[\sigma] = [\sigma_r] \merge [\delta]$. To reason from joinability of $\sigma_r$ and all its equivalent states, we also have to be able to split $\sigma'$ into two parts $[\sigma_r']$ and $[\delta']$. However, for the congruence relation $\hat{=}$ this is not possible as we have shown before. Hence, to use the idea of reasoning from rule states about all larger states, the equivalence relation has to be $\merge$-compatible.

Note that the split property is only required to hold for states where the invariant holds. Hence, by an appropriate invariant, the split property can be recovered to show confluence w.r.t. this invariant.
\end{example}

%The split property can be combined with a congruence relation:
\begin{definition}[compatibility]
\label[definition]{def:compatibility}
An equivalence relation $\eq$ is $\circ$-\emph{compatible} w.r.t. an operator $\circ$ if it is a congruence relation with the split property w.r.t. $\circ$.
\end{definition}
At first glance, compatibility is a strict property that does not seem to be satisfied by many equivalence relations. However, there are interesting $\merge$-compatible equivalence relations different from the trivial state equivalence: % as exemplified in the following. %In \Cref{sec:examples}, more meaningful examples are presented that have the compatibility property.

\begin{example}[Multi-Set Items (cont.)]
\label{ex:set_items_seteq}
\Cref{ex:set_items} is continued by introducing the following equivalence relation $\seteq$ that is the smallest equivalence relation on CHR states such that
$[\langle \{ \constr{mset}(S_1) \} \uplus \mathbb{G}_1 ; \mathbb{B}_1 ; \mathbb{V}_1 \rangle ] \seteq [\langle \{ \constr{mset}(S_2) \} \uplus \mathbb{G}_2 ; \mathbb{B}_2 ; \mathbb{V}_2 \rangle ]$
if and only if $S_1$ is a permutation of $S_2$ and $[\langle \mathbb{G}_1 ; \mathbb{B}_1 ; \mathbb{V}_1 \rangle ] \seteq [\langle \mathbb{G}_2 ; \mathbb{B}_2 ; \mathbb{V}_2 \rangle ].$

For instance, the following states in shorthand notation are equivalent according to $\seteq$: $\constr{set}([a,b]), \constr{set}([c,d]), \constr{item}(e) \seteq \constr{set}([b,a]), \constr{set}([d,c]), \constr{item}(e)$ and $\constr{item}(a) \seteq \constr{item}(a)$. However, $\constr{set}([a,b]), \constr{item}(c) \nseteq \constr{set}([a,b]),\constr{item}(d)$ and $\constr{set}([a,b]), \constr{item}(c) \nseteq \constr{set}([a,b]),\constr{item}(c),\constr{item}(c)$ because the second item $c$ does not have a partner in the first state. Similarly,   $\constr{set}([a,b]), \constr{set}([b,a]) \nseteq \constr{set}([a,b])$ because there is only one $\constr{set}$ constraint on the right hand side. 

Note that by this definition the following holds for states with unbound variables: $[\langle \constr{mset}(X) ; \constr{perm}(X,Y) ; \{ X,Y \} \rangle ] \seteq [\langle \constr{mset}(Y) ; \constr{perm}(X,Y) ; \{ X,Y \} \rangle ]$ where $\constr{perm}(X,Y)$ is a built-in constraint that is true, if $X$ is a permutation of $Y$, but  $[\langle \constr{mset}(X) ; \top ; \{ X\} \rangle ] \nseteq [\langle \constr{mset}(Y) ; \top ; \{ Y\} \rangle ]$. The two variables $X$ and $Y$ are free variables and therefore it is not clear that they are permutations of each other. By adding that $X$ is a permutation of $Y$, the two states are equivalent.

This equivalence relation is $\merge$-compatible. For reasons of space, the proof is provided in \Cref{app:seteq_compatibility_proof}.
\end{example}

\section{Confluence Modulo Equivalence w.r.t. an Invariant} 
\label{sec:testing_conf_mod_eq}

First of all, the notion of invariant-based confluence modulo equivalence is defined.
\begin{definition}[$\inv$-confluence modulo $\eq$]
\label{def:inv_confmeq}
A state transition system is $\inv$-confluent modulo an equivalence relation $\eq$ for an invariant $\inv$ if and only if 
\begin{gather*}
\forall \sigma_1,\sigma_2,\sigma_1',\sigma_2' ~.~ \inv(\sigma_1) \land \inv(\sigma_1') \land \sigma_1 \eq \sigma_1' \land \sigma_1 \mapsto^* \sigma_2 \land \sigma_1' \mapsto^* \sigma_2' \\
\rightarrow \exists \sigma_3, \sigma_3' ~.~ \sigma_2 \mapsto^* \sigma_3 \land \sigma_2' \mapsto^* \sigma_3' \land \sigma_3 \eq \sigma_3'.
\end{gather*}
\end{definition}

In practice, the following restiction is made on invariants:
\begin{definition}[$\eq$ maintains $\inv$]
An invariant $\inv$ is maintained by an equivalence relation $\eq$, if and only if for all states $[\sigma] \eq [\sigma']$ it holds that $\inv([\sigma]) \leftrightarrow \inv([\sigma'])$.
\end{definition}
This restriction ensures the practicability of \Cref{def:inv_confmeq}, since it may be inelegant and misleading if in a program that is $\inv$-confluent modulo $\eq$ there exist two equivalent states where one is part of the program (i.e. the invariant holds) and the other is not. This may have undesired effects in further analysis. Hence, the invariant and equivalence relation should be chosen such that they are compliant anyway, although it is not required by \Cref{def:inv_confmeq}.

The following lemma is an important generalization of the joinability corollary in \cite[p. 85, cor. 14.9]{raiser_phdthesis10} that is a direct consequence of monotonicity. The idea was that if two states are joinable, they are still joinable if they are extended by the identical state. In the context of confluence modulo equivalence, we have to generalize this approach of exploiting monotonicity such that the state extensions are not required to be syntactically identical, but equivalent for some user-defined compatible equivalence relation. %\Cref{fig:joinability_monotonicity} illustrates the proposition of the lemma.
\begin{lemma}[joinability]
\label[lemma]{lemma:joinability_monotonicity}
Let $\eq$ be a congruence relation with respect to $\merge$ and $[\sigma_1], [\sigma_2], [\sigma_1'], [\sigma_2']$ be CHR states with $[\sigma_1'] \eq [\sigma_2']$. If $[\sigma_1] \joins^\eq [\sigma_2]$ then $([\sigma_1] \merge_\mathbb{V} [\sigma_1']) \joins^\eq ([\sigma_2] \merge_\mathbb{V} [\sigma_2'])$ for all $\mathbb{V}$.
\end{lemma}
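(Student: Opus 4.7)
The plan is to unpack the definition of joinability modulo equivalence, use \Cref{lemma:monotonicity} to lift each witnessing derivation into the merged state, and then close with a single application of congruence. Because the transition relation is defined on equivalence classes and congruence is the only hypothesis placed on $\eq$, essentially no case analysis on the rule applications is needed.

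Concretely, I would first unfold $[\sigma_1] \joins^\eq [\sigma_2]$ to obtain $[\tau_1], [\tau_2]$ with $[\sigma_1] \mapsto^* [\tau_1]$, $[\sigma_2] \mapsto^* [\tau_2]$ and $[\tau_1] \eq [\tau_2]$. Next I would induct on the length of each of these $\mapsto^*$-derivations, using monotonicity as the step case: if $[\rho] \mapsto [\rho']$, then for any $[\pi]$ we have $[\rho] \merge_\mathbb{V} [\pi] \mapsto [\rho'] \merge_\mathbb{V} [\pi]$, so by iterating this with $[\pi] := [\sigma_1']$ (resp. $[\pi] := [\sigma_2']$) held fixed I obtain
\[
[\sigma_1] \merge_\mathbb{V} [\sigma_1'] \mapsto^* [\tau_1] \merge_\mathbb{V} [\sigma_1'], \qquad [\sigma_2] \merge_\mathbb{V} [\sigma_2'] \mapsto^* [\tau_2] \merge_\mathbb{V} [\sigma_2'].
\]
It then suffices to observe that $[\tau_1] \merge_\mathbb{V} [\sigma_1'] \eq [\tau_2] \merge_\mathbb{V} [\sigma_2']$, which follows directly from the congruence of $\eq$ w.r.t. $\merge$ applied to the two premises $[\tau_1] \eq [\tau_2]$ and $[\sigma_1'] \eq [\sigma_2']$. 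These two merged states are the required join witnesses for $([\sigma_1] \merge_\mathbb{V} [\sigma_1']) \joins^\eq ([\sigma_2] \merge_\mathbb{V} [\sigma_2'])$.

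The only real subtleties are bookkeeping. First, \Cref{def:congruence_relation} is phrased for a single binary operator, whereas here $\merge_\mathbb{V}$ is a $\mathbb{V}$-indexed family; since the argument invokes congruence only at the very last step, with the fixed $\mathbb{V}$ appearing in the statement, we just need congruence for that particular $\merge_\mathbb{V}$, which is what the hypothesis is intended to supply. Second, one has to maintain variable disjointness of the chosen representatives all along the two derivations so that each invocation of $\merge_\mathbb{V}$ is well defined; this is where I expect to spend the most care, but it is guaranteed by the freedom to choose representatives of equivalence classes and by the variant-renaming built into the transition scheme of \Cref{def:chr_operational_semantics}. Once that bookkeeping is in place, the proof collapses into the three-line argument above.
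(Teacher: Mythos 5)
Your proposal is correct and follows essentially the same route as the paper's proof: unfold joinability to get the witnesses, lift both derivations into the merged states via \Cref{lemma:monotonicity}, and close with one application of congruence to $[\tau_1] \eq [\tau_2]$ and $[\sigma_1'] \eq [\sigma_2']$. Your explicit induction on derivation length merely spells out the lift of single-step monotonicity to $\mapsto^*$ that the paper takes for granted.
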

\begin{proof}
Let $[\sigma_1], [\sigma_2], [\sigma_1'], [\sigma_2']$ be CHR states with $[\sigma_1'] \eq [\sigma_2']$ and $[\sigma_1] \joins^\eq [\sigma_2]$. Hence, there are CHR states $[\tau], [\tau']$ with $[\tau] \eq [\tau']$ and $[\sigma_1] \mapsto^* [\tau]$ and $[\sigma_2] \mapsto^* [\tau']$. 
Due to monotonicity (c.f. \Cref{lemma:monotonicity}), we have that $([\sigma_1] \merge_\mathbb{V} [\sigma_1']) \mapsto^* ([\tau] \merge_\mathbb{V} [\sigma_1'])$ and $([\sigma_2] \merge_\mathbb{V} [\sigma_2']) \mapsto^* ([\tau'] \merge_\mathbb{V} [\sigma_2'])$. Since $[\sigma_1'] \eq [\sigma_2']$, $[\tau] \eq [\tau']$ and $\eq$ is a congruence relation with respect to $\merge$, we have that $([\tau] \merge_\mathbb{V} [\sigma_1']) \eq ([\tau'] \merge_\mathbb{V} [\sigma_2'])$.
\end{proof}

In the next step, we provide a test for the $\alpha$ property in the context of an invariant. The basic idea is that we gather all overlap states and extend them with a minimal extension such that the invariant does hold. For all those minimal extensions of all overlap states we have to show joinability modulo equivalence. Formally, this leads to the following lemma.
\begin{lemma}[$\alpha$ property test]
\label[lemma]{lemma:alpha_property_test}
Let $\mathcal{P}$ be a CHR program, $\inv$ an invariant, $\eq$ a congruence relation and let $\mathcal{M}^\inv([\sigma])$ be well-defined for all overlaps $\sigma$ of rules in $\mathcal{P}$, then:
$\mathcal{P}$ has the $\alpha$ property with respect to $\inv$ and $\eq$ if and only if for all overlaps $\sigma$ with critical pairs $(\sigma_1,\sigma_2)$ and all $[\sigma_\mathrm{m}] \in \mathcal{M}^\inv([\sigma])$ holds $([\sigma_1] \merge [\sigma_\mathrm{m}] \joins^\eq [\sigma_2] \merge [\sigma_\mathrm{m}])$.
\end{lemma}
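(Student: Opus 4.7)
The plan is to prove the biconditional in two directions, using monotonicity (\Cref{lemma:monotonicity}), the generalized joinability lemma (\Cref{lemma:joinability_monotonicity}), and the restricted associativity of $\merge$ (\Cref{lemma:special_assoc_merge}). Note that the $\alpha$ property of \Cref{def:alpha_beta_property} is here implicitly restricted to states satisfying $\inv$, which is where the minimal extensions $\mathcal{M}^\inv$ come in.

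For the ``if'' direction, I would take an arbitrary $[\sigma]$ with $\inv([\sigma])$ and two one-step transitions $[\sigma] \mapsto^{r_1} [\tau_1]$, $[\sigma] \mapsto^{r_2} [\tau_2]$. The key step is to argue that this ``concrete'' situation arises from an overlap: there is an overlap state $[\sigma_o]$ of $r_1$ and $r_2$ and a context $[\delta]$ (with variables disjoint from those of $\sigma_o$, and with $[\sigma_o]$'s head/guard variables made global) such that $[\sigma] = [\sigma_o] \merge [\delta]$, and such that the corresponding critical pair $([\sigma_1],[\sigma_2])$ satisfies $[\tau_i] = [\sigma_i] \merge [\delta]$ by monotonicity. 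Since $\inv([\sigma_o \merge \delta])$ holds, we have $[\delta] \in \Sigma^\inv([\sigma_o])$, so by definition of the set of minimal extensions there exists $[\sigma_m] \in \mathcal{M}^\inv([\sigma_o])$ and $[\delta']$ with $[\delta] = [\sigma_m] \merge [\delta']$. By hypothesis $[\sigma_1] \merge [\sigma_m] \joins^\eq [\sigma_2] \merge [\sigma_m]$, and applying \Cref{lemma:joinability_monotonicity} with the reflexive equivalence $[\delta'] \eq [\delta']$ yields
\[
 \bigl([\sigma_1] \merge [\sigma_m]\bigr) \merge [\delta'] \;\joins^\eq\; \bigl([\sigma_2] \merge [\sigma_m]\bigr) \merge [\delta'].
\]
An application of \Cref{lemma:special_assoc_merge} rewrites both sides as $[\sigma_i] \merge ([\sigma_m] \merge [\delta']) = [\sigma_i] \merge [\delta] = [\tau_i]$, giving $[\tau_1] \joins^\eq [\tau_2]$ as required.

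For the ``only if'' direction, fix an overlap $[\sigma_o]$ with critical pair $([\sigma_1],[\sigma_2])$ and $[\sigma_m] \in \mathcal{M}^\inv([\sigma_o])$. By the very definition of $\Sigma^\inv$, the state $[\sigma_o] \merge [\sigma_m]$ satisfies the invariant. From the definition of overlap, both rules $r_1$ and $r_2$ are applicable at $[\sigma_o]$, producing the two members of the critical pair, and monotonicity then gives the transitions $[\sigma_o] \merge [\sigma_m] \mapsto^{r_i} [\sigma_i] \merge [\sigma_m]$ for $i = 1,2$. The $\alpha$ property (restricted to $\inv$-valid states) then directly yields $[\sigma_1] \merge [\sigma_m] \joins^\eq [\sigma_2] \merge [\sigma_m]$.

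The only really delicate step is the decomposition used in the first direction, i.e.\ recovering the overlap and context from an arbitrary pair of concurrently applicable rules. This relies on the standard CHR argument that a matching pair of rule applications in $[\sigma]$ always factors through some overlap of the two rule variants (after suitable renaming so that local/global variables are disjoint, enabling the use of \Cref{lemma:special_assoc_merge}); the variable-handling bookkeeping here is what makes the step technically the hardest, while the rest of the argument is a direct combination of monotonicity, congruence, and minimality.
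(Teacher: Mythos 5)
Your proposal is correct and follows essentially the same route as the paper's own proof: decompose the concrete state as an overlap merged with a context via $\merge_{\mathbb{V}}$, pick a minimal extension $[\sigma_\mathrm{m}] \ltstate [\delta]$, reassociate with \Cref{lemma:special_assoc_merge}, and lift joinability with \Cref{lemma:joinability_monotonicity} using $[\delta'] \eq [\delta']$; the converse direction likewise matches. The one small inaccuracy is your claim that every pair of concurrent rule applications factors through an overlap --- when the two applications consume disjoint constraints there is no overlap (since $(O_\mathrm{r} \uplus O_\mathrm{r}') \neq \emptyset$ is required), and the paper disposes of that case separately as trivially joinable.
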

\begin{proof}
For reasons of space, the proof is provided in \Cref{app:alpha_property_test_proof}. It is similar to the proof for the $\beta$ property.
\end{proof}

To prove local confluence modulo equivalence, we also have to prove the $\beta$ property, i.e. we have to consider that if two states are equivalent, they have to be joinable modulo equivalence. In the following lemma, we adapt the test for the $\alpha$ property to cover the $\beta$ property. 

The main idea is to reason from rule states, i.e. the head and guard constraints of rules, over all states. For this purpose, all rule states have to be extended by a minimal extension such that the invariant holds. Then all states that are equivalent to these extended rule states have to be shown to be joinable to the extended rule state after the rule has been applied. Unfortunately -- depending on the invariant -- in general there can be infinitely many such equivalent states. However, the idea still simplifies the proof procedure for the $\beta$ property, as only rule states have to be considered in contrast to all states of the transition system.

This is not possible for general equivalence relations, but only for those that are compatible to the merge operator and that maintain the invariant. %The lemma is the main contribution of the paper.
\begin{lemma}[$\beta$ property test]
\label[lemma]{lemma:beta_property_test}
Let $\mathcal{P}$ be a CHR program, $\inv$ an invariant, $\eq$ a $\merge$-compatible equivalence relation that maintains $\inv$ and let $\mathcal{M}^\inv([\sigma])$ be well-defined for all rule states $[\sigma]$ in $\mathcal{P}$, then:
$\mathcal{P}$ has the $\beta$ property with respect to $\inv$ and $\eq$ if and only if for all rule states $[\sigma]$ with successor state $[\sigma_1]$, all $[\sigma_2]$ with $[\sigma] \eq [\sigma_2]$ and all $[\sigma_\mathrm{m}^1] \in \mathcal{M}^\inv([\sigma])$ and all $[\sigma_\mathrm{m}^2] \eq [\sigma_\mathrm{m}^1]$ where $\inv([\sigma_2] \merge [\sigma_\mathrm{m}^2])$ is satisfied, it holds that $([\sigma_1] \merge [\sigma_\mathrm{m}^1]) \joins^\eq ([\sigma_2] \merge [\sigma_\mathrm{m}^2])$.
\end{lemma}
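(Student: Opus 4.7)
The plan is to mirror the structure of the proof of the $\alpha$-property test (\Cref{lemma:alpha_property_test}) but to replace the reliance on plain monotonicity by the $\merge$-compatibility of $\eq$.

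For the nontrivial direction ($\Leftarrow$), start with states $[\rho], [\rho'], [\rho'']$ such that $\inv([\rho])$, $[\rho] \mapsto [\rho']$ and $[\rho] \eq [\rho'']$; because $\eq$ maintains $\inv$, one also has $\inv([\rho''])$. The transition $[\rho] \mapsto [\rho']$ is induced by some rule $r$ of $\mathcal{P}$ with rule state $[\sigma]$ and successor $[\sigma_1]$, so by the CHR transition scheme one may write $[\rho] = [\sigma] \merge [\delta]$ and $[\rho'] = [\sigma_1] \merge [\delta]$ for a suitable context $[\delta]$ whose local variables are disjoint from those of the rule. Applying the split property (\Cref{def:split_property}) of $\eq$ to $[\sigma] \merge [\delta] \eq [\rho'']$ then yields a decomposition $[\rho''] = [\sigma_2] \merge [\delta']$ with $[\sigma] \eq [\sigma_2]$ and $[\delta] \eq [\delta']$.

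Next, since $\inv([\sigma] \merge [\delta])$ holds, $[\delta]$ lies in $\Sigma^\inv([\sigma])$, so minimality provides $[\sigma_{\mathrm m}^1] \in \mathcal{M}^\inv([\sigma])$ with $[\sigma_{\mathrm m}^1] \ltstate [\delta]$, i.e.\ $[\delta] = [\sigma_{\mathrm m}^1] \merge [\hat{\delta}]$ for some $[\hat{\delta}]$. Applying the split property once more, now to $[\sigma_{\mathrm m}^1] \merge [\hat{\delta}] \eq [\delta']$, produces $[\delta'] = [\sigma_{\mathrm m}^2] \merge [\hat{\delta}']$ with $[\sigma_{\mathrm m}^1] \eq [\sigma_{\mathrm m}^2]$ and $[\hat{\delta}] \eq [\hat{\delta}']$. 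Congruence of $\eq$ together with the assumption that $\eq$ maintains $\inv$ then yields $\inv([\sigma_2] \merge [\sigma_{\mathrm m}^2])$, because $[\sigma] \merge [\sigma_{\mathrm m}^1]$ satisfies $\inv$ by definition of $\mathcal{M}^\inv$ and is equivalent to $[\sigma_2] \merge [\sigma_{\mathrm m}^2]$. Hence the hypothesis of the lemma gives $([\sigma_1] \merge [\sigma_{\mathrm m}^1]) \joins^\eq ([\sigma_2] \merge [\sigma_{\mathrm m}^2])$; lifting this with \Cref{lemma:joinability_monotonicity} along $[\hat{\delta}] \eq [\hat{\delta}']$ and reassociating via \Cref{lemma:special_assoc_merge} back to $[\sigma_1] \merge [\delta]$ and $[\sigma_2] \merge [\delta']$ yields $[\rho'] \joins^\eq [\rho'']$, as required. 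The converse direction ($\Rightarrow$) is almost immediate: given a rule state $[\sigma]$, its successor $[\sigma_1]$, and $[\sigma_2], [\sigma_{\mathrm m}^1], [\sigma_{\mathrm m}^2]$ as in the statement, monotonicity (\Cref{lemma:monotonicity}) lifts $[\sigma] \mapsto [\sigma_1]$ to $[\sigma] \merge [\sigma_{\mathrm m}^1] \mapsto [\sigma_1] \merge [\sigma_{\mathrm m}^1]$; the state $[\sigma] \merge [\sigma_{\mathrm m}^1]$ satisfies $\inv$ by definition of $\mathcal{M}^\inv$, and congruence of $\eq$ gives $[\sigma] \merge [\sigma_{\mathrm m}^1] \eq [\sigma_2] \merge [\sigma_{\mathrm m}^2]$, so the assumed $\beta$ property delivers the required joinability.

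The main obstacle I anticipate is the bookkeeping for the two successive applications of the split property together with the variable-disjointness hypotheses demanded by \Cref{lemma:special_assoc_merge}: representatives must be chosen so that the repeated decompositions of $[\delta]$ and $[\delta']$ share no local variables with each other or with $[\sigma], [\sigma_1], [\sigma_2]$, so that the final reassociation of $[\sigma_1] \merge [\sigma_{\mathrm m}^1] \merge [\hat{\delta}]$ as $[\sigma_1] \merge [\delta]$ (and likewise on the primed side) is legitimate. Beyond this technicality everything follows from $\merge$-compatibility, the definition of $\mathcal{M}^\inv$, and the maintenance of $\inv$ by $\eq$.
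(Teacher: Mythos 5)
Your proof is correct and follows essentially the same route as the paper's: decompose the transition source via the rule state and a context, apply the split property twice (once to split off $[\sigma_2]$ and once to split off $[\sigma_{\mathrm m}^2]$ from the context), invoke the hypothesis on the minimally extended rule states, and lift back with \Cref{lemma:joinability_monotonicity} and \Cref{lemma:special_assoc_merge}. Your explicit justification that $\inv([\sigma_2] \merge [\sigma_{\mathrm m}^2])$ holds (via congruence and maintenance of $\inv$, from $\inv([\sigma] \merge [\sigma_{\mathrm m}^1])$) is a detail the paper leaves implicit, and is a welcome addition.
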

\begin{proof}
``$\Rightarrow$'': 
This follows from \Cref{def:alpha_beta_property} and \Cref{lemma:joinability_monotonicity}, since $[\sigma_\mathrm{m}^1] \eq [\sigma_\mathrm{m}^2]$. 

``$\Leftarrow$'': 
Let $[\sigma], [\sigma_1]$ and $[\sigma_2]$ be CHR states where $\inv([\sigma])$ and $\inv([\sigma_2])$ hold and $[\sigma] \mapsto_r [\sigma_1]$ for some rule $r$ and $[\sigma] \eq [\sigma_2]$. Since a rule is applicable in $[\sigma]$, there is a rule state $\sigma_r = \langle \_ ; \_ ; \mathbb{V} \rangle$ of rule $r$ such that for some $[\delta_1] := [ \langle \mathbb{G} ; \mathbb{B} ; \mathbb{V'} \rangle]$ it holds that $[\sigma] = [\sigma_r] \merge_\mathbb{V} [\delta_1].$ The variables $\mathbb{V}$ from the rule $r$ are not part of $[\sigma]$ and are therefore removed by the merging $\merge_\mathbb{V}$. 

By definition of the rule state (c.f. \Cref{def:rule_state}) and definition of the state transition system (c.f. \Cref{def:chr_operational_semantics}), we also have that there is a state $[\sigma_1']$ such that $[\sigma_r] \mapsto_r [\sigma_1'].$ Due to monotonicity (c.f. \Cref{lemma:monotonicity}) it holds that $[\sigma_1] = [\sigma_1'] \merge_\mathbb{V} [\delta_1].$

Let $[\sigma_2] = [\sigma_2'] \merge_\mathbb{V} [\delta_2]$ be a partition of $[\sigma_2]$ such that $[\sigma_2'] \eq [\sigma_r]$ and $[\delta_2] \eq [\delta_1]$. Such a partition exists since $\eq$ is $\merge$-compatible and $[\sigma] \eq [\sigma_2]$ by precondition.

As $\inv([\sigma])$ holds: $[\delta_1] \in \Sigma^\inv([\sigma_r])$ and therefore $\exists [\sigma_\mathrm{m}^1] \in \mathcal{M}^\inv([\sigma_r]) . [\sigma_\mathrm{m}^1] \ltstate [\delta_1]$. This means that there is a minimal element $[\sigma_\mathrm{m}^1]$ in the set of extensions of the rule state $[\sigma_r]$ that extend $[\sigma_r]$ such that the invariant holds.

It follows by definition of $\ltstate$ that $\exists [\delta_1'] . [\delta_1] = [\sigma_\mathrm{m}^1] \merge [\delta_1']$ and hence $[\sigma] = [\sigma_r] \merge_\mathbb{V} ([\sigma_\mathrm{m}] \merge [\delta_1'])$. By \Cref{lemma:special_assoc_merge}, we get $[\sigma] = ([\sigma_r] \merge [\sigma_\mathrm{m}]) \merge_\mathbb{V} [\delta_1']$. Analogously, by substitution of $[\delta_1]$ in $[\sigma_1]$ and due to the split property of $\eq$, we find that $[\sigma_i] = ([\sigma_i'] \merge [\sigma_\mathrm{m}^i]) \merge_\mathbb{V} [\delta_i']$ for $i = 1, 2$ where $[\sigma_\mathrm{m}^1] \eq [\sigma_\mathrm{m}^2]$. 

Since $\inv$ is maintained by $\eq$, we have by precondition that $([\sigma_1'] \merge [\sigma_\mathrm{m}^1]) \joins^\eq ([\sigma_2'] \merge [\sigma_\mathrm{m}^2])$. Since $[\sigma_1] = ([\sigma_1'] \merge [\sigma_\mathrm{m}^1]) \merge_\mathbb{V} [\delta_1']$ and $[\sigma_2] = ([\sigma_2'] \merge [\sigma_\mathrm{m}^2]) \merge_\mathbb{V} [\delta_2']$ and $[\delta_1'] \eq [\delta_2']$, we have by \Cref{lemma:joinability_monotonicity} also that $([\sigma_1] \joins^\eq [\sigma_2])$.
\end{proof}

\begin{theorem}[confluence modulo $\eq$ w.r.t. invariant]
\label{thm:conf_mod_eq}
Let $\inv$ be an invariant and $\mathcal{P}$ an $\inv$-terminating CHR program. $\mathcal{P}$ has the $\alpha$ and $\beta$ property with respect to $\inv$ and an equivalence relation $\eq$ if and only if $\mathcal{P}$ is $\inv$-confluent modulo $\eq$.
\end{theorem}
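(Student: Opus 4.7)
The plan is to reduce the statement to a direct application of Huet's theorem (\Cref{thm:huet}) by restricting attention to the sub-transition-system on $\inv$-valid states. Concretely, I would define $\Sigma^\inv := \{[\sigma] \in \Sigma \mid \inv([\sigma])\}$ and let $\mapsto_\inv$ be the restriction of $\mapsto$ to $\Sigma^\inv \times \Sigma^\inv$. Because $\inv$ is an invariant, if $[\sigma] \in \Sigma^\inv$ and $[\sigma] \mapsto [\tau]$ then also $[\tau] \in \Sigma^\inv$, so the restriction loses no outgoing transitions and $(\Sigma^\inv, \mapsto_\inv)$ is a genuine sub-transition-system. Moreover, $\inv$-termination of $\mathcal{P}$ is exactly termination of $(\Sigma^\inv, \mapsto_\inv)$, putting us in position to use \Cref{thm:huet}.

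Next I would verify that the notions match up under this restriction. On the one hand, $\inv$-confluence modulo $\eq$ of $\mathcal{P}$ (\Cref{def:inv_confmeq}) is by definition confluence modulo $\eq$ of $(\Sigma^\inv,\mapsto_\inv)$: the preconditions $\inv(\sigma_1) \land \inv(\sigma_1')$ say the start states lie in $\Sigma^\inv$, the derivations $\sigma_1 \mapsto^* \sigma_2$ and $\sigma_1' \mapsto^* \sigma_2'$ stay in $\Sigma^\inv$ by invariance, and the joining witnesses $\sigma_3,\sigma_3'$ are reached from $\inv$-valid states, hence themselves $\inv$-valid. On the other hand, having the $\alpha$ and $\beta$ properties with respect to $\inv$ and $\eq$ means the conditions of \Cref{def:alpha_beta_property} hold whenever the source states involved satisfy $\inv$, which is exactly the $\alpha$ and $\beta$ properties of $(\Sigma^\inv,\mapsto_\inv)$ modulo $\eq$. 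Local confluence modulo $\eq$ of the restricted system is therefore equivalent to $\mathcal{P}$ having both properties w.r.t. $\inv$ and $\eq$.

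With these two correspondences in place, the theorem is obtained by invoking \Cref{thm:huet} on the terminating system $(\Sigma^\inv,\mapsto_\inv)$: it is confluent modulo $\eq$ if and only if it is locally confluent modulo $\eq$. Translating both sides through the correspondences yields precisely the biconditional claimed.

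The only delicate point I anticipate is bookkeeping around the witnesses $\tau,\tau'$ of joinability modulo $\eq$. These are existentially quantified, and $\eq$ is the \emph{global} equivalence on $\Sigma$, not an equivalence on $\Sigma^\inv$; so one must check that the joinability relation $\joins^\eq$ as used in $\alpha$, $\beta$, and the conclusion of \Cref{def:inv_confmeq} does not implicitly drag in $\inv$-invalid intermediate or joining states. This is where the invariant property of $\inv$ does the work: every state reachable via $\mapsto^*$ from an $\inv$-valid state is itself $\inv$-valid, so the witnessing derivations from $\sigma_2,\sigma_2'$ (or from $\tau,\tau'$ in the local setting) land in $\Sigma^\inv$ automatically, and no appeal to ``$\eq$ maintains $\inv$'' is required for this theorem (that restriction is only needed later, for the rule-state-based $\beta$ test of \Cref{lemma:beta_property_test}). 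Once this is observed, the rest of the argument is essentially a translation lemma followed by citing \Cref{thm:huet}.
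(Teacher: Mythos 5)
Your proposal is correct and takes exactly the same route as the paper, whose entire proof is the one-line remark that Huet's theorem is applied to the reduced transition system containing only the $\inv$-valid states. You simply spell out the bookkeeping (the restriction loses no transitions by invariance, termination and the $\alpha$/$\beta$ conditions transfer, and the joinability witnesses stay $\inv$-valid) that the paper leaves implicit.
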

\begin{proof}
\Cref{thm:huet} is used on the reduced state transition system that only contains states where the invariant holds.
\end{proof}

Note that for testing the $\alpha$ property, the criterion only assumes a congruence relation, whereas for proving the $\beta$ property the split property must hold as well and the invariant must maintain equivalence.

\begin{example}[Item Sets (cont.)]
\label{ex:set_items_confluence}
It is shown that the program from \Cref{ex:set_items} is $\mathcal{S}$-confluent modulo $\seteq$.
\begin{description}
 \item[$\alpha$ property]
 
 The only overlap that satisfies the invariant $\mathcal{S}$ has the shorthand notation
$\constr{item}(A), \constr{item}(B), \constr{set}(L)$. It yields the critical pair $\constr{item}(B), \constr{set}([A|L])$ and $\constr{item}(A), \constr{set}([B|L])$. It can be reduced to $\constr{set}([B,A|L]) \seteq \constr{set}([A,B|L])$.

\item[$\beta$ property]

 All equivalences to the rule state have the form $[\langle \constr{item}(A), \constr{mset}(L);$ $\top;\{A,L\}\rangle] \seteq [\langle \constr{item}(A), \constr{mset}(L'); \top;\{A,L'\}  \rangle]$ where $L'$ is a permutation of $L$. The preconditions of \Cref{lemma:beta_property_test} are satisfied, since both states satisfy the invariant. The both states reduce to the goal stores $\constr{mset}([A|L])$ and $\constr{mset}([A|L'])$. It is clear that those two final states are equivalent and therefore joinable modulo $\seteq$.
\qed
\end{description}
\end{example}

\section{Discussion and Related Work}
\label{sec:discussion_related_work}

The $\alpha$ property test is decidable for terminating programs as long as the invariant and the equivalence relation (a congruence relation w.r.t. $\merge$) are decidable and the set of minimal extensions is finite. In the $\beta$ property test, the class of states that are equivalent to the rule state may be infinitely large in general. 

In the multi-set example (c.f. \Cref{ex:set_items_confluence}), it can be seen that only one other state has to be considered to show joinability of all states equivalent to the rule state, since the CHR semantics allows for logical variables. In general, there might be more complicated equivalence relations that are more difficult to test.

Confluence modulo equivalence with invariants has been studied for a variant of CHR that includes non-logical built-in constraints \cite{christiansen_2017,christiansen_2015}. %It is left open if the two semantics coincide for purely logical built-in constraints used in the standard operational semantics. Hence, it is not clear if the results are transferable.
This approach introduces a meta language for CHR to prove confluence modulo equivalence. It is claimed that the traditional proof methods for confluence in CHR expressed in first-order logic are not sufficient in the context of confluence modulo equivalence, especially with non-logical built-in constraints. The meta-level is claimed to allow proving confluence modulo equivalence for all equivalence relations. It is shown to be useful for programs with non-logical built-in constraints. 

%However, meta-level reasoning can complicate the proofs, although 
In many cases the analysis of programs with purely logical CHR is desired and invariants and equivalence relations behave in a way that allow for a more direct treatment. In our approach, no meta-level is necessary. It is directly available for the de facto standard of CHR semantics. It seems to us that the example in \cite{christiansen_2017} indicates that for proving confluence modulo equivalence with the meta-level approach, monotonicity and therefore $\merge$-compatibility are used implicitly.

Invariant-based confluence (or observable confluence) for CHR without user-defined equivalence relations has been studied in \cite{duck_stuck_sulz_observable_confluence_iclp07,duck_stuckey_sulzmann_observable_confl_chr06}. In \cite{raiser_phdthesis10}, it has been shown that the proposed partial order is not well-defined. Our approach integrates the corrected version of invariant-based confluence as found in \cite{raiser_phdthesis10}. Additionally, it extends the idea by confluence modulo user-defined equivalence relations.

\section{Conclusion and Future Work}

A sufficient and necessary criterion for confluence modulo equivalence w.r.t. an invariant has been presented and formally proven (c.f. \Cref{lemma:joinability_monotonicity,lemma:alpha_property_test,lemma:beta_property_test,thm:conf_mod_eq}). For this purpose, the set of \emph{compatible} equivalence relations (c.f. \Cref{def:congruence_relation,def:split_property,def:compatibility}) has been identified to behave well with this confluence criterion for CHR. %The criterion has been applied for some examples in the blocks world domain and for computational cognitive modeling (c.f. \Cref{sec:examples}).
When an equivalence relation has been shown to be compatible and maintains the invariant, it can be used directly for any program. In practice, it seems to be desirable that the equivalence relation maintains the invariant.

The approach is directly applicable for a non-trivial example (c.f. \Cref{ex:set_items_confluence}). It has been tested for other examples which indicates that the defined subset of equivalence relations is actually meaningful. Decidability of the $\alpha$ property is maintained. For some invariants, the set of minimal extensions can be infinitely large and decidability is lost.  Although the $\beta$ property leads to an infinite number of states that have to be considered in general, the proofs are simplified tremendously, as only states equivalent to the finite number of rule states have to be considered. 

In many cases it may suffice to only use an equivalence relation without an invariant. The problems originating from invariants are inexistent for those cases and our approach yields a sufficient and necessary criterion for confluence modulo equivalence without invariants. 

For the future, we want to investigate how our approach can be unified with the meta-level approach \cite{christiansen_2017} or other proof methods in the context of confluence such as case splitting. Furthermore, it could be interesting how non-confluent programs can be completed such that they become confluent modulo equivalence.

\section*{Acknowledgements}

 The authors would like to thank Henning Christiansen and Maja H. Kirkeby for the valuable discussions and ideas for future work.

%
% ---- Bibliography ----
%
\bibliographystyle{splncs}
\bibliography{bib} 

\appendix
\section{Proofs}

\subsection{Merge Compatibility of $\seteq$}

\label{app:seteq_compatibility_proof}

The multi-set equivalence relation $\seteq$ from \Cref{ex:set_items_seteq} is $\merge$-compatible.

\begin{proof}
\begin{description}
 \item[Congruence Relation] Let $[\sigma], [\sigma'], [\rho], [\rho']$ be CHR states. From the definition it follows that if $[\sigma] \seteq [\sigma']$ , the number of constraints (and in particular \emph{mset} constraints) is equivalent. This means that if there is no \emph{mset} constraint in $[\sigma]$ or $[\rho]$, then there is none in $[\sigma']$ or $[\rho']$. We use induction over the number $n$ of constraints in the constraint store of $[\sigma]$. Since $\merge_\mathbb{V}$ is commutative (c.f. \cite[p. 51 sqq.]{raiser_phdthesis10}), this can be done w.l.o.g.
 
  \begin{description}
 \item[Base Case ($n = 0$)]
 
 Let  $[\sigma] = [\langle \emptyset ; \mathbb{B}_\sigma ; \mathbb{V}_\sigma \rangle]$. Since $[\sigma] \seteq [\sigma']$, we get $[\sigma'] = [\langle \emptyset ; \mathbb{B}_\sigma ; \mathbb{V}_\sigma \rangle]$. Let $[\rho] = [\langle \mathbb{G}_\rho ; \mathbb{B}_\rho ; \mathbb{V}_\rho \rangle] \seteq [\rho'] = [\langle \mathbb{G}_{\rho'} ; \mathbb{B}_{\rho'} ; \mathbb{V}_{\rho} \rangle]$. Then for all $\mathbb{V}:$ $[\sigma] \merge_\mathbb{V} [\rho] = [\langle \mathbb{G}_\rho ; \mathbb{B}_\sigma \land \mathbb{B}_\rho ; (\mathbb{V}_\sigma \cup \mathbb{V}_\rho) \setminus \mathbb{V} \rangle] \seteq [\langle \mathbb{G}_{\rho'} ; \mathbb{B}_\sigma \land \mathbb{B}_{\rho'} ; (\mathbb{V}_\sigma \cup \mathbb{V}_{\rho}) \setminus \mathbb{V} \rangle] = [\sigma'] \merge_\mathbb{V} [\rho']$. %This follows from the definition of state equivalence $\equiv$, that demands $\exists s. \mathbb{B}_{\rho} \leftrightarrow \exists s' . \mathbb{B}_{\rho'}$ and therefore $\exists s. (\mathbb{B}_{\sigma} \land \mathbb{B}_{\rho}) \leftrightarrow \exists s' . (\mathbb{B}_{\sigma} \land \mathbb{B}_{\rho'})$. 

 \item[Induction Step ($n \rightarrow n + 1$)]
 
 We add a constraint $c$ to the constraint store of a state $[\delta]$ with $n$ constraints. If $c$ is not a \emph{mset} constraint, it is clear that the proposition holds. Let $c = \constr{mset}(L)$ and $[\sigma] = [\langle \{ c \} \uplus \mathbb{G}_\sigma ; \mathbb{B}_\sigma ; \mathbb{V}_\sigma \rangle]$. Since $[\sigma] \seteq [\sigma']$, we get by definition that 
 \begin{equation*}
 [\sigma'] = [\langle \{ c' \} \uplus \mathbb{G}_{\sigma'} ; \mathbb{B}_{\sigma} ; \mathbb{V}_{\sigma} \rangle]
 \end{equation*}
 for a constraint $c' = \constr{mset}(L')$   and $L$ is a permutation of $L'$.  Let 
 \begin{align*}
  [\delta] := &  [\langle \mathbb{G}_{\sigma} ; \mathbb{B}_{\sigma} ; \mathbb{V}_{\sigma} \rangle] \\
     \seteq   &  [\langle \mathbb{G}_{\sigma'} ; \mathbb{B}_{\sigma} ; \mathbb{V}_{\sigma} \rangle] =: [\delta']
 \end{align*}
 and
  \begin{align*}
  [\rho] := &  [\langle \mathbb{G}_{\rho} ; \mathbb{B}_{\rho} ; \mathbb{V}_{\rho} \rangle] \\
     \seteq   &  [\langle \mathbb{G}_{\rho'} ; \mathbb{B}_{\rho'} ; \mathbb{V}_{\rho'} \rangle] =: [\rho'].
 \end{align*}

 By using the induction hypothesis, it follows that  $[\delta] \merge_\mathbb{V} [\rho] \seteq [\delta'] \merge_\mathbb{V} [\rho']$ and hence 
 \begin{align*}
   &[\sigma] \merge_\mathbb{V} [\rho] \\
=  & [\langle \{c \} \uplus \mathbb{G}_{\sigma} \uplus \mathbb{G}_{\rho}; \mathbb{B}_{\sigma} \land  \mathbb{B}_{\rho}; (\mathbb{V}_{\sigma} \cup \mathbb{V}_{\rho}) \setminus \mathbb{V} \rangle]\\
\seteq & [\langle \{ c' \} \uplus \mathbb{G}_{\sigma'} \uplus \mathbb{G}_{\rho'}; \mathbb{B}_{\sigma} \land  \mathbb{B}_{\rho'}; (\mathbb{V}_{\sigma} \cup \mathbb{V}_{\rho'}) \setminus \mathbb{V} \rangle] \\
=  & [\sigma'] \merge_\mathbb{V} [\rho']
 \end{align*}
 by definition of $\merge_\mathbb{V}$ and $\seteq$.
 \end{description}
  
 \item[Split Property]
 
 The property is proven by induction.
 \begin{description}
 \item[Base Case ($n = 0$)]
 
 Let $[\sigma] = [\langle \emptyset ; \mathbb{B} ; \mathbb{V} \rangle] =  [\sigma_1] \merge_\mathbb{V} [\sigma_2]$ and $[\sigma] \seteq [\rho]$.
 If $[\sigma]$ has an empty goal store, it does not contain any $\constr{mset}$ constraints and hence $[\sigma] = [\rho]$. There is a trivial split $[\rho] = [\sigma_1] \merge_\mathbb{V} [\sigma_2]$.
 
 \item[Induction Step ($n \rightarrow n + 1$)]
 
 Let $[\sigma] = [\langle \{c\} \uplus \mathbb{G}_{\sigma} ; \mathbb{B}_{\sigma} ; \mathbb{V}_{\sigma} \rangle] =  [\sigma_1] \merge_\mathbb{V} [\sigma_2]$ for a constraint $c$ and $[\sigma] \seteq [\rho]$. There are two cases:
 \begin{enumerate}
  \item $c$ is not a $\constr{mset}$ constraint. Therefore, $[\sigma] = [\rho]$. There is a trivial split $[\rho] = [\sigma_1] \merge_\mathbb{V} [\sigma_2]$.
  \item $c = \constr{mset}(L)$ for some $L$. Then, since $[\sigma] \seteq [\rho]$, there is a $c' = \constr{mset}(L')$ such that $[\rho] = [\langle \{c'\} \uplus \mathbb{G}_{\rho} ; \mathbb{B}_{\rho} ; \mathbb{V}_{\rho} \rangle]$ where $L$ is a permutation of $L'$ and 
  \begin{align*}
    [\delta_\sigma] :=  & [\langle \mathbb{G}_{\sigma} ; \mathbb{B}_{\sigma} ; \mathbb{V}_{\sigma} \rangle] \\
                 \seteq & [\langle \mathbb{G}_{\rho} ; \mathbb{B}_{\rho} ; \mathbb{V}_{\rho} \rangle] =: [\delta_\rho].
  \end{align*}

 From the induction hypothesis it follows that if $[\delta_\sigma] = [\delta_\sigma^1] \merge_\mathbb{V} [\delta_\sigma^2]$, then there are $[\delta_\rho^1]$ and $[\delta_\rho^2]$, such that $[\delta_\rho] = [\delta_\rho^1] \merge_\mathbb{V} [\delta_\rho^2]$. The constraint $c$ from $[\sigma]$ can now be part of either $[\sigma_1]$ or $[\sigma_2]$. Let w.l.o.g. $c$ be part of $[\sigma_1]$, i.e. 
 \begin{equation*}
  [\sigma_1] = [\langle \{ c \} \uplus \mathbb{G}_{\delta_\sigma^1} ; \mathbb{B}_{\delta_\sigma^1} ; \mathbb{V}_{\delta_\sigma^1} \rangle]
 \end{equation*}
 where
 \begin{equation*}
   [\delta_\sigma^1] = [\langle \mathbb{G}_{\delta_\sigma^1} ; \mathbb{B}_{\delta_\sigma^1} ; \mathbb{V}_{\delta_\sigma^1} \rangle].
 \end{equation*}
 In this case, $[\sigma_2] = [\delta_\sigma^2]$.
 
 Analogously, we have that
 \begin{equation*}
  [\rho_1] = [\langle \{ c \} \uplus \mathbb{G}_{\delta_\rho^1} ; \mathbb{B}_{\delta_\rho^1} ; \mathbb{V}_{\delta_\rho^1} \rangle]
 \end{equation*}
 where
 \begin{equation*}
   [\delta_\rho^1] = [\langle \mathbb{G}_{\delta_\rho^1} ; \mathbb{B}_{\delta_\rho^1} ; \mathbb{V}_{\delta_\rho^1} \rangle].
 \end{equation*}

 Since $[\delta_\rho^1] \seteq [\delta_\sigma^1]$ by induction hypothesis and $L$ is a permutation of $L'$, by definition of $\seteq$ in \Cref{ex:set_items_seteq} it is clear that $[\sigma_i] \seteq [\rho_i]$ for $i = 1,2$.
\qed
 \end{enumerate}
 \end{description}
\end{description}
\end{proof}

\subsection{Proof for $\alpha$ Property Test}
\label{app:alpha_property_test_proof}

In this section, the $\alpha$ property test from \Cref{lemma:alpha_property_test} is proven.

Let $\mathcal{P}$ be a CHR program, $\inv$ an invariant, $\eq$ a congruence relation and let $\mathcal{M}^\inv([\sigma])$ be well-defined for all overlaps $\sigma$, then:
$\mathcal{P}$ has the $\alpha$ property with respect to $\inv$ and $\eq$ if and only if for all overlaps $\sigma$ with critical pairs $(\sigma_1,\sigma_2)$ and all $[\sigma_\mathrm{m}] \in \mathcal{M}^\inv([\sigma])$ holds $([\sigma_1] \merge [\sigma_\mathrm{m}] \joins^\eq [\sigma_2] \merge [\sigma_\mathrm{m}])$.

\begin{proof}
The $\alpha$ property test coincides with the invariant-based confluence test first presented for CHR in \cite[p. 86, lemma 14.11]{raiser_phdthesis10}. However, the proof has to be adapted in the last step as joinability now allows states to join modulo an equivalence relation. 

``$\Rightarrow$'': 
This follows directly from \cref{def:alpha_beta_property} and \cref{lemma:joinability_monotonicity}. 

``$\Leftarrow$'': 
Let $[\sigma], [\sigma_1]$ and $[\sigma_2]$ be CHR states where $\inv([\sigma])$ holds and $[\sigma] \mapsto_{r_1} [\sigma_1]$ for some rule $r_1$ and $[\sigma] \mapsto_{r_2} [\sigma_2]$ for some rule $r_2$. By \cref{def:overlap}, there exists an overlap state $\sigma_\mathrm{o} = \langle \_ ; \_ ; \mathbb{V} \rangle$ of rule $r_1$ and $r_2$ where $\mathbb{V}$ contains all variables from $r_1$ and $r_2$ such that for some $[\delta] := [ \langle \mathbb{G} ; \mathbb{B} ; \mathbb{V'} \rangle] \text{ it holds that } [\sigma] = [\sigma_\mathrm{o}] \merge_\mathbb{V} [\delta].$
The variables $\mathbb{V}$ from the rules $r_1$ and $r_2$ are not part of $[\sigma]$ and are therefore removed by the merging $\merge_\mathbb{V}$. Due to monotonicity (c.f. \cref{lemma:monotonicity}), we have that
\begin{itemize}
 \item $[\sigma_\mathrm{o}] \mapsto_{r_1} [\sigma_1']$ with $[\sigma_1] = [\sigma_1'] \merge_\mathbb{V} [\delta],$ and
 \item $[\sigma_\mathrm{o}] \mapsto_{r_2} [\sigma_2']$ with $[\sigma_2] = [\sigma_2'] \merge_\mathbb{V} [\delta]$.
\end{itemize}
If no such overlap exists, the two rule applications are independent and therefore trivially joinable.

As $\inv([\sigma])$ holds, we have that $[\delta] \in \Sigma^\inv([\sigma_\mathrm{o}])$ and therefore there is a element in the set of minimal extensions that is less than or equal to $[\delta]$, i.e. $\exists [\sigma_\mathrm{m}] \in \mathcal{M}^\inv([\sigma_\mathrm{o}]) . [\sigma_\mathrm{m}] \ltstate [\delta]$. This means that there is a minimal element $[\sigma_\mathrm{m}]$ in the set of extensions of the overlap state $[\sigma_\mathrm{o}]$ that extend $[\sigma_\mathrm{o}]$ such that the invariant holds.

It follows by definition of $\ltstate$ that $\exists [\delta'] . [\delta] = [\sigma_\mathrm{m}] \merge [\delta']$ and hence $[\sigma] = [\sigma_\mathrm{o}] \merge_\mathbb{V} ([\sigma_\mathrm{m}] \merge [\delta'])$. By \cref{lemma:special_assoc_merge}, we get $[\sigma] = ([\sigma_\mathrm{o}] \merge [\sigma_\mathrm{m}]) \merge_\mathbb{V} [\delta']$. Analogously, since $\inv([\sigma_i]), i = 1, 2$ holds due to the definition of an invariant, we find that $[\sigma_i] = ([\sigma_i'] \merge [\sigma_\mathrm{m}]) \merge_\mathbb{V} [\delta']$ for $i = 1, 2$. 

At this point the proof differs from confluence without an equivalence relation. By the precondition we now only have that $([\sigma_1'] \merge [\sigma_\mathrm{m}]) \joins^\eq ([\sigma_2'] \merge [\sigma_\mathrm{m}])$, i.e. modulo an equivalence relation. Since $[\sigma_1] = ([\sigma_1'] \merge [\sigma_\mathrm{m}]) \merge_\mathbb{V} [\delta']$ and $[\sigma_2] = ([\sigma_2'] \merge [\sigma_\mathrm{m}]) \merge_\mathbb{V} [\delta']$, we can apply \cref{lemma:joinability_monotonicity} due to reflexivity of $\eq$ (i.e. $[\delta'] \eq [\delta']$) and get $([\sigma_1] \joins^\eq [\sigma_2])$.
\end{proof}

\end{document}